\documentclass[aps, prx, superscriptaddress, reprint, floatfix, longbibliography]{revtex4-2}

\usepackage{graphicx}   
\usepackage[colorlinks = true, linkcolor=blue, citecolor=blue, urlcolor  = blue]{hyperref}
\usepackage{amsmath}    
\usepackage{amssymb}    
\usepackage{xcolor}
\usepackage{color}
\usepackage{physics}
\usepackage{tikz}
\usepackage{dsfont}
\usepackage{booktabs}
\usepackage{times}
\usepackage{comment}

\usepackage[capitalize]{cleveref}
\usepackage{quantikz}
\usepackage{svg}

\usetikzlibrary{shapes.geometric}
\usepackage{relsize}
\usepackage{tcolorbox}
\usepackage{enumerate}
\usepackage{amsthm}
\usepackage{orcidlink}

\crefname{equation}{Eq.}{equations}
\crefname{section}{Sec.}{sections}
\crefname{figure}{Fig.}{figures}
\crefname{appendix}{Appendix}{appendices}
\crefname{table}{Table}{tables}

\theoremstyle{plain}

\newtheorem{proposition}{Proposition}

\newtheorem{corollary}{Corollary}
\crefname{proposition}{Proposition}{Propositions}
\crefname{corollary}{Corollary}{Corollaries}
\AfterEndEnvironment{theorem}{\noindent\ignorespaces}
\AfterEndEnvironment{lemma}{\noindent\ignorespaces}
\AfterEndEnvironment{corollary}{\noindent\ignorespaces}

\AfterEndEnvironment{definition}{\noindent\ignorespaces}
\AfterEndEnvironment{proof}{\noindent\ignorespaces}

\newcommand{\UpsilonPT}{\Upsilon}

\begin{document}

\title{Quantum non-Markovian response spectra}

\author{Aaron Sander\,\orcidlink{0009-0007-9166-6113}}
\email{aaron.sander@tum.de}
\affiliation{Technical University of Munich, Munich, Germany}

\author{John F. Kam\,\orcidlink{0009-0008-7597-1094}}
\affiliation{Monash University, Melbourne, Australia}
\affiliation{Agency for Science, Technology and Research (A*STAR), Singapore}

\author{Gauthameshwar S.\,\orcidlink{0009-0005-9715-5758}}
\affiliation{Singapore University of Technology and Design, Singapore}
\affiliation{Centre for Quantum Technologies, Singapore}

\author{Robert Wille\,\orcidlink{0000-0002-4993-7860}}
\affiliation{Technical University of Munich, Munich, Germany}
\affiliation{MQSC GmbH, Munich, Germany}
\affiliation{Software Competence Center Hagenberg GmbH (SCCH), Hagenberg, Austria}

\author{Kavan Modi\,\orcidlink{0000-0002-2054-9901}}
\email{kavan\_modi@sutd.edu.sg}
\affiliation{Singapore University of Technology and Design, Singapore}
\affiliation{Centre for Quantum Technologies, Singapore}

\date{\today}

\begin{abstract}
Quantum systems are rarely isolated from their surroundings.
These hidden environments can retain information and return it later, producing correlated errors in quantum computers and complicating the interpretation of experiments.
On a quantum processor, for example, one job can therefore leave an environmental trace that affects the next, even after the qubits themselves have been reset.
Learning how this hidden influence persists and affects later behavior is therefore important, but the environment is often impossible to measure directly, and fully reconstructing its effects through tomography rapidly becomes impractical.
Here, we introduce a response matrix that uses controlled interventions and observations of the system alone to reveal the temporal structure of environmental memory.
The matrix records how future observations change across different controlled histories.
Its spectrum characterizes how memory is structured and how it persists under specified system resets, while suitable linear combinations of its entries can rule out specified classical models of that memory.
Because its history--future queries can be evaluated independently, numerical response matrices can be assembled in parallel.
We establish the framework theoretically, characterize response spectra in an interacting spin process, and certify quantum memory in a two-qubit benchmark.
This approach turns system observations into a probe of inaccessible environments and could support the diagnosis of correlated noise, tests of open system models, and memory-aware control design.
\end{abstract}

\maketitle

\section{Introduction}

Environmental memory makes a quantum system's future response depend on more than its present state.
It arises when degrees of freedom outside our direct access retain information about earlier interactions and return it later.
It can distort benchmarking signals \cite{white-nrp}, invalidate Markovian noise models, and frustrate control and error mitigation strategies that neglect correlations with the environment \cite{Harper2020,White2020}.
Structured vibrational and photonic environments can also sustain information backflow and long-lived coherence in molecular, biological, and solid-state systems \cite{Chin2013,GaugerPRL2011,Strathearn2018}.

Quantum processors bring this problem into sharp focus because they both suffer from environmental memory and provide a platform for studying it.
As hardware, they can retain unwanted information in spectators, leakage levels, material modes, or slowly evolving control parameters \cite{White2020,WhitePRXQ2022}, potentially compromising the performance of the device~\cite{Nielsen2021gatesettomography, FigueroaRomeroPRXQ2021}, and even compromising quantum error correction \cite{Kam2025, Kam2}.
As simulators, they can generate complex open dynamics in which a chosen subsystem interacts with a structured, evolving complement \cite{GarciaPerez2020,HanPRL2021,ErbanniPRA2023,Kim2023}.
In both roles, we usually access the process through controls and observations of the system rather than direct measurements of its environment.
One direct consequence is that the order of quantum jobs can matter.
A circuit can alter an inaccessible environment that later interacts with the next circuit, even after the accessible qubits have been reset \cite{White2020,PollockPRL2018}.

Additionally, memory is intrinsically a multi-time phenomenon.
The process tensor framework provides an operational language for describing it in open quantum dynamics and, more generally, quantum stochastic processes \cite{PollockPRA2018,PollockPRL2018,Milz2021}.
Building on quantum combs and operational causal modeling \cite{Chiribella2009,Costa2016}, a process tensor encodes how a process responds to arbitrary sequences of interventions.
It supports complementary questions about the strength of memory \cite{PollockPRA2018,Taranto2021}, its temporal extent or Markov order \cite{TarantoPRL2019,TarantoPRA2019,Taranto3}, its complexity \cite{PRXQuantum.4.020310,chang2026,wang2025}, and whether it is classical or quantum \cite{Giarmatzi2021,MilzSciPost2021,MilzPRX2020,Strasberg2019,Baecker2026}.
Machine learning and tensor network methods can learn both structured non-Markovian processes \cite{GuoPRA2020,Zhang2025,White2025}, and structured single-time quantum channels \cite{Torlai2023}.
As temporal representations, they also connect environmental memory to the classical cost of simulating open and many-body dynamics \cite{LerosePRB2021,FolignoPRX2023,Sonner2021,LinkPRL2024,Gribben2022,Cygorek2024ACE,KeelingPRX2026,Cygorek2022,Fux2023,Cygorek2025,Cygorek2024,Cygorek2024Sublinear,Ng2023,Park2024,Nguyen2024,Chen2024,Guo2024,Guo2024b,PRXQuantum.5.010314,Sun2025,WangPRL2024,fgc4-hgk1}. Complementary tensor network trajectory approaches also enable
scalable simulation of open quantum systems governed by
Markovian Lindblad dynamics \cite{Sander_TJM_2025}.

These advances leave a practical gap between detecting memory and reconstructing the entire process.
A causal break test can determine whether the future still depends on the history after the accessible system has been discarded and prepared anew \cite{PollockPRL2018}.
Detecting such a dependence shows that memory is present, but it does not determine how that dependence is organized, where it appears in time, or whether it survives later interventions.
At the other extreme, full process tensor tomography captures this information but generally requires resources that grow exponentially with the number of intervention times \cite{Milz2021}.
The unresolved question is whether selected intervention--response statistics can expose this structure directly.

Related predictive-state and realization approaches also organize dynamics through future responses.
Predictive-state representations encode a controlled stochastic system through predictions of future action--observation sequences \cite{Littman2001}, while quantum realization methods use Hankel constructions to infer the dimension of a hidden system from a local probe time series \cite{Sone2017} or classify finite-dimensional filters for continuously monitored bosonic systems \cite{Emerson2026}.
These methods seek a predictive state or dynamical realization, but share the broader goal of learning about inaccessible dynamics through observable responses.

In this work, we introduce a response matrix that extracts this missing structure directly from
selected causal break experiments. Rather than treating each history--future comparison as an
isolated memory test, the matrix organizes the dependencies measured across many histories and
future probes. Its singular spectrum separates the Markovian, memoryless baseline, in which the normalized future
response is independent of the tested history, from additional response patterns through which
memory remains visible. The normalized spectral weights then describe whether the measured response
is concentrated in one leading pattern or distributed across several.
Under the recovery conditions that we derive below, the response rank and spectrum match those of the
underlying process without requiring full reconstruction. 
Because the history--future queries are independent, the matrix can be assembled in parallel in simulation and constructed in the same way from experimental data or learned open system models.

The same response data can also certify quantum memory, provided that the selected probes support a valid witness with a negative expectation value.
Building on quantum-memory witnesses based on temporal Choi entanglement \cite{Giarmatzi2021,MilzSciPost2021}, we give a theoretical formulation of witness search as a semidefinite program within the span of the selected instrument probes, enforcing non-negativity on every process separable across the chosen history--future cut.
A negative witness value therefore excludes the adopted classical feed-forward model for the full process, without reconstructing its Choi operator.
The formulation applies to general finite-dimensional process tensors and specified probe families, such that whether it finds a certificate depends on the accessible space and the chosen positive map.
Failure to find a negative value remains inconclusive and does not establish that the memory is classical.
Our numerical demonstration evaluates an analytically supplied witness using the response data.

Crucially, this method does not require full process tomography or an informationally complete set of probes to obtain useful information about the non-Markovian spectra.
The chosen instruments define an accessible operator space in which the response matrix resolves memory, allowing the analysis to be tailored to available controls and measurements.
We establish conditions under which its rank and normalised singular spectrum recover those of the process tensor across the selected temporal cut, identifying when this operational description faithfully captures the underlying structure.
Outside these conditions, the diagnostics remain probe-dependent, in that we do not assume that an arbitrary restricted probe set retains all relevant information.
Nevertheless, our numerical results on an Ising benchmark show that the response entropy and spectrum track the qualitative changes in the corresponding process tensor MPO bond entropy and spectrum even when exact recovery is not imposed.
This provides a practical route to resolving temporal structure beyond a binary memory test, without reconstructing the complete multi-time process.

This work is organized as follows.
\Cref{sec:quantum-stochastic-processes} introduces relevant background about quantum stochastic processes, in particular, the process tensor framework and the causal break used to isolate memory carried outside the accessible system.
\Cref{sec:metric} constructs the response matrix directly from selected histories and future observations.
\Cref{sec:response-diagnostics} develops spectral diagnostics for the presence, structure, and persistence of memory, and formulates a theoretical search for accessible quantum-memory witnesses.
\Cref{sec:numerical-results} demonstrates two complementary uses of the framework where an interacting Ising process reveals how the response spectrum changes with coupling, temporal cut, and intervening resets, while a two-qubit storage-and-retrieval benchmark shows that memory can remain visible after it becomes classically reproducible.
An analytically known witness distinguishes the quantum and classical regimes of the latter benchmark.
Finally, \cref{sec:discussion} discusses the scope, limitations, and broader uses of the framework.
Supporting examples, numerical conventions, and consistency checks are collected in~\cref{app:classical_memory_entropy,app:quantum-memory-witness,app:numerical-consistency}.

\section{Quantum stochastic processes}
\label{sec:quantum-stochastic-processes}

At the most basic level, a stochastic process describes how observations made at different times are statistically related.
In the classical setting, one can assign random variables to a sequence of times and characterize the process by a consistent family of joint probability distributions.
In contrast, the quantum setting is more subtle because measurements and controls generally disturb the system.
Statistics at later times, therefore, depend both on the underlying dynamics and on the interventions applied at early times, so a sequence of system states does not fully describe the process.

\subsection{Process tensors and multi-time interventions}
Quantum multi-time statistics require an object that maps each causally ordered sequence of interventions to the probability of its record. The process tensor, equivalently a quantum comb, is that object \cite{PollockPRA2018,Milz2021,TarantoPRXQ2026}.
Once we have that, memory can be isolated by asking whether earlier interventions still affect the future after the accessible system has been discarded and prepared anew.

Consider an accessible system $S$ interacting with an environment $E$, with a (potentially correlated) initial state $\rho_0^{SE}$.
The experimenter intervenes on the system at times $t_0,\ldots,t_k$ and performs a terminal readout at $t_{\mathrm{out}}$.
At time $t_j$, the outcome $x_j$ is associated with a completely positive, trace-nonincreasing map $\mathcal A_j^{(x_j)}$.
The maps associated with all outcomes at a given time form an instrument when their sum is trace-preserving.
Between adjacent interventions, the completely positive, trace-preserving joint maps $\{\mathcal L_{j:j-1}\}_{j=1}^{k}$ propagate the system and environment.
A final joint map $\mathcal L_{\mathrm{out}:k}$ propagates them from the last intervention to the readout, where $\mathcal M_{\mathrm{out}}^{(r)}$ records the terminal outcome $r$.
Identity maps on the environment are suppressed for the system interventions and readout.

One intervention run produces the record $\boldsymbol\xi=(x_0,\ldots,x_k,r)$.
For the time-local interventions considered here, $\mathcal J_{\mathrm{out}:0}$ denotes the corresponding multi-time instrument, and $\mathbf A_{\mathrm{out}:0}^{(\boldsymbol\xi)}$ denotes the tensor product of the local Choi operators associated with that record \cite{TarantoPRXQ2026}.
We use input-before-output ordering for each Choi operator in a fixed product basis, where every transpose below is taken in that basis.

The initial state and all uncontrolled joint maps, including $\mathcal L_{\mathrm{out}:k}$, define the process tensor $\UpsilonPT_{\mathrm{out}:0}$ after the environment is traced out.
The process tensor maps the chosen intervention sequence to the probability of its observed record.
In the dynamical and Choi representations, respectively, this probability is
\begin{gather}
\begin{split}
 \Pr(\boldsymbol\xi\mid\mathcal J_{\mathrm{out}:0})
 =& \Tr\!\Big\{\big[
 \mathcal M_{\mathrm{out}}^{(r)}
 \circ \mathcal L_{\mathrm{out}:k}
 \circ \mathcal A_k^{(x_k)}
 \circ \mathcal{L}_{k:k-1}
 \\
 &\qquad \circ \mathcal A_{k-1}^{(x_{k-1})}
 \circ \cdots
 \\
 &\qquad \circ \mathcal{L}_{2:1}
 \circ \mathcal A_1^{(x_1)}
 \circ \mathcal{L}_{1:0}
 \circ \mathcal{A}_{0}^{(x_0)}
 \big](\rho^{SE}_0)\Big\} \\
=& 
 \Tr\!\left[
   \mathbf A_{\mathrm{out}:0}^{(\boldsymbol\xi)}
   \UpsilonPT_{\mathrm{out}:0}^{\mathsf T}
 \right].
 \label{eq:generalized-born-rule}
\end{split}
\end{gather}
The first expression follows the experiment in chronological order, while the second is the generalized Born rule in Choi form.
In this division, $\UpsilonPT_{\mathrm{out}:0}$ contains the initial state and uncontrolled evolution, while $\mathbf A_{\mathrm{out}:0}^{(\boldsymbol\xi)}$ contains the interventions and terminal readout.
Since the generalized Born rule holds for every admissible sequence, the process tensor contains the statistics of every admissible sequence of interventions.

\subsection{Causal breaks and environmental memory}

To isolate memory across a particular temporal cut, one asks whether resetting the accessible system removes every dependence on its earlier history.
A \textit{causal break} measures the incoming system, records the outcome, discards the system, and prepares a known state.
The preparation and subsequent controls are chosen independently of the earlier controls and break outcome.
Under these conditions, the accessible system cannot itself carry information from before the break, so any remaining dependence on the earlier controls or break outcome must be mediated by degrees of freedom that were not reset \cite{PollockPRL2018}.

More precisely, at the chosen cut time $t_c$, $\{E_c^{(i)}\}_i$ forms a POVM and $\sigma_c^{(p)}$ is a normalized state prepared after the measurement such that the causal break is the measure--prepare map
\begin{equation}
    \mathcal B_c^{(i,p)}(\rho)
    := \Tr\!\left(E_c^{(i)}\rho\right)\sigma_c^{(p)}.
    \label{eq:causal-break-map}
\end{equation}
where the trace gives the probability weight of the recorded outcome, while the output state is fixed by the preparation label $p$ rather than by the incoming system state.

In the dynamical expression in~\cref{eq:generalized-born-rule}, the causal break map then replaces the ordinary intervention $\mathcal A_c$ at $t_c$.
For an interior cut $c<k$, the resulting composition is
\begin{gather}
\begin{split}
&\underbrace{
\mathcal M_{\mathrm{out}}^{(r)}
\circ \mathcal L_{\mathrm{out}:k}
\circ \mathcal A_k^{(x_k)}
\circ \cdots
\circ \mathcal A_{c+1}^{(x_{c+1})}
\circ \mathcal L_{c+1:c}
}_{\text{post-break sequence}}
\circ
\underbrace{\mathcal B_c^{(i,p)}}_{\text{break at }t_c}
\\
&\qquad\circ
\underbrace{
\mathcal L_{c:c-1}
\circ \mathcal A_{c-1}^{(x_{c-1})}
\circ \cdots
\circ \mathcal L_{1:0}
\circ \mathcal A_0^{(x_0)}
}_{\text{pre-break sequence}}.
\label{eq:causal-break-sequence}
\end{split}
\end{gather}
such that the composition acts from right to left.
Thus, the interventions and outcomes through $t_{c-1}$, together with the cut outcome $i$, form the pre-break history.
The preparation $\sigma_c^{(p)}$, the later interventions, and the terminal readout form the future.
At the terminal cut $c=k$, the later controlled interventions are absent, but $\mathcal L_{\mathrm{out}:k}$ and the terminal readout still follow the break.

In the memoryless case, let $\mathbf h=(x_0,\ldots,x_{c-1})$ denote a pre-break history and let $\mathbf f=(x_{c+1},\ldots,x_k,r)$ denote the future outcome record.
Relative to the chosen history and future instruments, a process is conditionally memoryless at $t_c$ when
\begin{equation}
    \Pr(\mathbf f\mid \mathbf h,i,p)
    =\Pr(\mathbf f\mid p)
    \quad
    \text{for every $\mathbf h$ with $\Pr(\mathbf h,i)>0$}.
    \label{eq:conditional-memorylessness}
\end{equation}
Thus, once the preparation $p$ is fixed, the normalized future statistics do not depend on either the earlier history or the recorded break outcome.
A process is globally Markov when~\cref{eq:conditional-memorylessness} holds at every relevant cut and for all admissible instruments \cite{PollockPRL2018,PollockPRA2018}.

The process tensor contains the complete set of these intervention-dependent correlations, but its size grows rapidly with the number of interventions.
Process tensor matrix product operators (PT-MPOs) can compress this multi-time object when its temporal correlations are structured \cite{PollockPRA2018,Cygorek2025}.
Their bond spaces characterize the chosen representation of the full process, whereas a restricted intervention set may expose only part of that structure.
This leads us to the specific question: what properties of the process can be inferred directly from selected interventions and observations without reconstructing the complete process tensor?

\begin{figure*}[]
    \centering
    \includegraphics[width=0.7\textwidth]{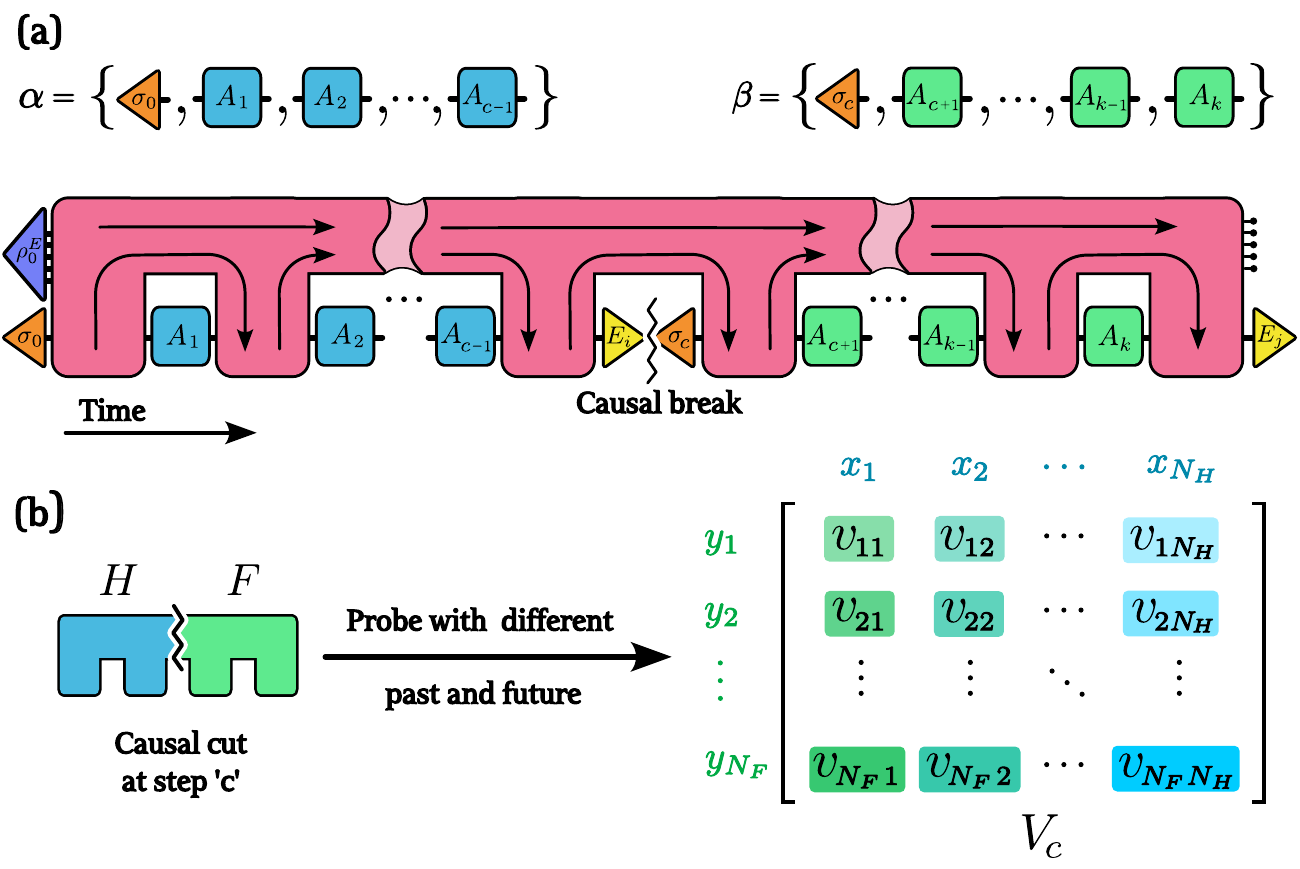}
    \caption{
        \textbf{Response matrix construction using an operational query across a causal break.} \textbf{(a)} The uncontrolled process begins in a general joint state $\rho_0^{SE}$, which need not factorize.
        The initial system intervention prepares the displayed state $\sigma_0$, and the selected history sequence $\alpha$ terminates with the physical measurement effect $E_c^{(i)}$.
        The protocol records outcome $i$, discards the incoming system, and prepares the state $\sigma_c^{(p)}$ independently of the history.
        It then applies a selected future sequence $\beta$, which terminates with the physical effect $E_{\mathrm{out}}^{(j)}$.
        \textbf{(b)} Repeating this causal break experiment for the selected history records $\mathbf x=(\alpha,i)$ and future records $\mathbf y=(p,\beta,j)$ produces the coefficients $v_{\mathbf x\mathbf y}$.
        The displayed probabilities are the POVM-response specialization of these coefficients.
        We arrange future records as rows and conditioned histories as columns, so the panel follows the convention in~\cref{eq:response-matrix-definition} and depicts $V_c$ directly. We thus can control the dimensions of the response matrix.
    }
    \label{fig:black_box}
\end{figure*}

\section{Response matrix construction}
\label{sec:metric}
We answer this question with a response matrix constructed directly from selected causal break experiments.
By placing conditioned histories and future probes on separate axes, the matrix captures the past--future dependence accessible to the chosen interventions and observations without reconstructing the full process tensor.

\subsection{From detecting memory to resolving its structure}
The causal break condition in~\cref{eq:conditional-memorylessness} provides an operational test for memory across a fixed temporal cut.
If two histories produce different normalized future statistics under the same break and future controls, then information from before the break remains accessible after it.
This distinction is sufficient to certify memory, but it does not describe the structure of that dependence.
A single violation does not tell us whether the tested future distinguishes one feature of the past or several independent features, nor how the observed response is distributed among those directions.
Comparing memory across temporal cuts and intervention sets therefore requires resolving these distinct directions of past--future dependence, not merely detecting that such dependence exists.

The complete process tensor contains this information in principle, but reconstructing it determines far more than is needed for this operational question and becomes expensive as the number of intervention times grows.
In a black-box experiment or simulation, one often has access only to a finite family of histories, preparations, future controls, and readouts.
The relevant object is then not the full latent process, but the dependence that those selected operations can resolve.
An operational description of this dependence must therefore separate history from future, preserve the probability weight of each conditioned history, and delimit exactly what the chosen probes can resolve.

Here, we introduce the response matrix as a sampled, operational representation of the past--future dependence accessible across a causal break.
We place each directly measured response at the intersection of one conditioned history and one future probe, so the matrix resolves the independent directions through which changes in the tested past appear in accessible future observations.
This construction requires neither process tensor reconstruction nor microscopic access to the system--environment dynamics.
We first define its entries through the causal break protocol and then derive how the matrix represents and normalizes the process information visible to the selected tester frames.

\subsection{Building the response matrix from causal breaks}
At intervention time $t_c$, we combine the pre-break sequence $\alpha$ and causal break outcome $i$ into the history record $\mathbf x=(\alpha,i)$.
We combine the independently chosen preparation $p$, later sequence $\beta$, and terminal effect or observable $j$ into the future record $\mathbf y=(p,\beta,j)$.
Thus $\mathbf x$ labels a conditioned past and $\mathbf y$ a selected future probe.
For each pair $(\mathbf x,\mathbf y)$, the protocol in~\cref{fig:black_box} executes $\alpha$, the causal break map $\mathcal B_c^{(i,p)}$, $\beta$, and terminal readout $j$ as one black-box query.
Once the settings have been selected, no query depends on the result of another.
The corresponding entries can therefore be evaluated independently and in parallel, then assembled into $V_c$. Experimentally, we assume reproducible sampling of the same process, including its initial joint state.

With the Choi ordering fixed above, the causal break map in~\cref{eq:causal-break-map} has Choi operator
\begin{equation}
    B_c^{(i,p)}
    = \left(E_c^{(i)}\right)^{\mathsf T}
      \otimes \sigma_c^{(p)},
    \label{eq:causal-break-choi}
\end{equation}
such that summing over $i$ yields a CPTP map. Let $A_j^{(x_j)}$ denote the Choi operator of a general completely positive intervention element $\mathcal A_j^{(x_j)}$. The labels $x_j, y_j$ specify both the chosen instrument setting and any retained outcome when applicable.
The distinct terminal effect $E_{\mathrm{out}}^{(j)}$ represents the map $\rho\mapsto\Tr(E_{\mathrm{out}}^{(j)}\rho)$ and contributes the Choi factor $(E_{\mathrm{out}}^{(j)})^{\mathsf T}$.
To allow a general, possibly correlated, pre-intervention state $\rho_0^{SE}$, we take the initial history element to be the replacement preparation $\mathcal P_0^{(a)}(\rho)=\Tr(\rho)\sigma_0^{(a)}$, whose Choi operator is
\begin{equation}
    A_0^{(a)}
    =\mathds{1}_{0^{\mathrm{in}}}\otimes\sigma_0^{(a)}.
\end{equation}
The history and future tester frames are the possibly redundant families $\{h^{(\mathbf x)}\}$ and $\{f^{(\mathbf y)}\}$ obtained by tensoring their local Choi factors.
Splitting the causal break Choi operator in~\cref{eq:causal-break-choi} across the history--future cut gives
\begin{gather}
\begin{split}
h^{(\mathbf x)}
&:= A_0^{(a)}
 \otimes A_1^{(x_1)} \otimes \cdots
 \otimes A_{c-1}^{(x_{c-1})}
 \otimes \left(E_c^{(i)}\right)^{\mathsf T},\\
f^{(\mathbf y)}
&:= \sigma_c^{(p)}
 \otimes A_{c+1}^{(y_{c+1})} \otimes \cdots
 \otimes A_k^{(y_k)}
 \otimes \left(E_{\mathrm{out}}^{(j)}\right)^{\mathsf T},
\end{split}
\end{gather}
Here $\alpha=(a,x_1,\ldots,x_{c-1})$ includes the initial preparation label $a$, while $\beta=(y_{c+1},\ldots,y_k)$.
At the boundary,
\begin{equation}
\begin{aligned}
    h^{(\mathbf x)}
    &=A_0^{(a)}\otimes(E_1^{(i)})^{\mathsf T}
    &&(c=1),\\
    f^{(\mathbf y)}
    &=\sigma_k^{(p)}\otimes(E_{\mathrm{out}}^{(j)})^{\mathsf T}
    &&(c=k).
\end{aligned}
\end{equation}
The final map $\mathcal L_{\mathrm{out}:k}$ remains inside the process tensor in the $c=k$ expression, between the preparation $\sigma_k^{(p)}$ and the terminal effect $E_{\mathrm{out}}^{(j)}$.
When the local elements belong to normalized instruments, their tensor products are tester elements in the usual sense.

We group all pre-break legs and the input side of the causal break slot into $H$, and the output side of that slot and all later legs into $F$.
We denote the process Choi operator in this grouping by $\UpsilonPT_{HF}$.
For a history record $\mathbf x$ and future record $\mathbf y$, the generalized Born rule defines the sampled response coefficient
\begin{gather}
v_{\mathbf{x}\mathbf{y}} :=
 \Tr\!\left[
    h^{(\mathbf x)} \otimes f^{(\mathbf y)}
   \UpsilonPT_{HF}^{\mathsf T}
 \right].
 \label{eq:born-rule}
\end{gather}
For CP intervention elements and physical POVM effects, $v_{\mathbf x\mathbf y}$ is the joint probability of all retained outcomes in the history and future records.
We can instead express a Hermitian tomographic observable as a real linear combination of measurement effects.
The same contraction then gives a history-probability-weighted expectation value and may be signed.

We collect these coefficients into the response matrix
\begin{equation}
    (V_c)_{\mathbf y\mathbf x}:=v_{\mathbf x\mathbf y},
    \label{eq:response-matrix-definition}
\end{equation}
whose rows label future records and whose columns label conditioned histories.
Thus each column describes the accessible future response associated with one selected history.
Panel~(b) of \cref{fig:black_box} uses the same orientation, with future records as rows and conditioned histories as columns, and therefore shows $V_c$ directly, completing its operational construction.

\subsection{Relating the response matrix to the underlying process}

Having constructed $V_c$ from measured responses, we now reorganize the same coefficients to relate the matrix to the underlying process and identify the direction that normalizes its columns.
Vectorization gives the shortest route to this relation.
Let $\lvert A\rangle\!\rangle:=\operatorname{vec}(A)$ denote column vectorization and let $\langle\!\langle A\rvert:=\operatorname{vec}(A)^\dagger$ denote its Hilbert--Schmidt dual.
We denote the vectorized history and future operator spaces by $\mathsf L_H$ and $\mathsf L_F$, with dimensions $D_H$ and $D_F$.
These Liouville spaces represent operators, not additional physical state spaces.
The transpose in the generalized Born rule gives the bilinear Choi pairing
\begin{equation}
    \Tr(A B^{\mathsf T})
    = \langle\!\langle A^*\vert B\rangle\!\rangle.
\end{equation}
Writing the matrix elements of $\UpsilonPT_{HF}$ in the fixed product basis, with $a,a'$ labeling history basis states and $b,b'$ labeling future basis states, we define
\begin{equation}
    \bigl(\Upsilon_{F\leftarrow H}\bigr)_{(b,b'),(a,a')}
    :=\bigl(\UpsilonPT_{HF}\bigr)_{(a,b),(a',b')}.
    \label{eq:process-reshaping}
\end{equation}
Thus the future index pair labels rows and the history index pair labels columns, with the first index varying fastest in each pair, as in column vectorization.
This regrouping leaves the Born-rule contraction unchanged, giving
\begin{gather}
    v_{\mathbf{x}\mathbf{y}}
    = \langle\!\langle (f^{(\mathbf y)})^* \vert
      \Upsilon_{F \leftarrow H}
      \vert h^{(\mathbf x)} \rangle\!\rangle.
\end{gather}
The matrix $\Upsilon_{F\leftarrow H}$ is not, in general, the Choi operator of a physical channel.
Let $N_H$ and $N_F$ be the numbers of sampled history and future records.
We define the corresponding frame matrices $H$ and $F$ as coordinate maps from $\mathsf L_H$ and $\mathsf L_F$ to the sampled record spaces:
\begin{equation}
    H := \sum_{\mathbf x} \lvert {\mathbf x}\rangle
        \langle\!\langle (h^{(\mathbf x)})^*\rvert
    \quad \mbox{and} \quad
    F := \sum_{\mathbf y} \lvert {\mathbf y}\rangle
        \langle\!\langle (f^{(\mathbf y)})^*\rvert.
\end{equation}
Thus $H\in\mathbb C^{N_H\times D_H}$, $F\in\mathbb C^{N_F\times D_F}$, and $\Upsilon_{F\leftarrow H}\in\mathbb C^{D_F\times D_H}$.
Collecting the vectorized Born-rule contractions for all sampled records yields the tester-frame factorization of the response matrix:
\begin{equation}\label{eq:respmat}
    V_c = F \Upsilon_{F \leftarrow H} H^{\mathsf{T}}.
\end{equation}
Here $V_c\in\mathbb C^{N_F\times N_H}$.
The ordinary transpose of $H$, rather than its adjoint, follows from the bilinear Choi pairing above.

Even when the conditional future response is independent of the history, columns of $V_c$ can differ in magnitude because the history branches have different probabilities.
The matrix $V_c$ retains its branch weights for the spectral analysis, where normalization is used to compare conditional future responses.
We now show how to recover the history-branch probability from the measured responses.
The future frame must span a deterministic tester $f^{(\mathrm{det})}$.
We construct it by holding the post-break preparation fixed, fixing a trace-preserving map or summing every outcome of a complete instrument at each later time, and summing every outcome of the terminal POVM.
We choose coefficients $n_{\mathbf y}$ that express this tester as
\begin{equation}
    f^{(\mathrm{det})}
    =\sum_{\mathbf y}n_{\mathbf y}f^{(\mathbf y)},
    \qquad
    \langle n\rvert
    :=\sum_{\mathbf y}n_{\mathbf y}\langle\mathbf y\rvert,
    \qquad n_{\mathbf y}\in\mathbb R.
\end{equation}
Then
\begin{equation}
    q_{\mathbf x}
    :=\langle n\rvert V_c\lvert\mathbf x\rangle
    \label{eq:history-branch-probability}
\end{equation}
is the probability of the corresponding history branch.
In an observable representation, the terminal normalization direction is $E_{\mathrm{out}}^{(0)}=\mathds{1}$, whose transpose is unchanged.
Although the coefficients $n_{\mathbf y}$ need not be unique for a redundant frame, every exact representation of the same deterministic tester gives the same $q_{\mathbf x}$.

\Cref{eq:respmat} also makes the scope of the construction explicit.
If the selected histories and future probes span the relevant operator spaces, $V_c$ determines the full response across the cut within those spaces.
Linearly independent probes give unique coordinates for this response, whereas redundant probes allow more than one coordinate representation.
If the probes do not span those spaces, $V_c$ determines only the part of the response that they can access.
With the construction and its scope established, we now use $V_c$ to identify and quantify the environmental memory visible to the chosen probes.

\section{What the Response Spectrum Reveals}
\label{sec:response-diagnostics}

The process tensor framework identifies environmental memory as history dependence that survives a causal break, while the response matrix $V_c$ collects selected instances of that dependence.
We now apply the standard singular-value decomposition (SVD)
\begin{equation}
    V_c
    =
    \sum_{m=1}^{r}
    \sigma_m
    \lvert\ell_m\rangle
    \langle r_m\rvert,
    \qquad
    \sigma_1\geq\cdots\geq\sigma_r>0,
    \label{eq:response-svd}
\end{equation}
where $\sigma_m$ are real, positive values and  $r=\operatorname{rank}(V_c)$.
We call the nonzero singular values,
\begin{equation}
    \operatorname{spec}_{\mathrm{resp}}(V_c)
    :=
    \{\sigma_m\}_{m=1}^{r},
    \label{eq:response-spectrum}
\end{equation}
the response spectrum and the paired singular directions are the response modes. 

A single causal break violation shows that the tested future retains some dependence on the past, but it does not reveal how many independent patterns carry that dependence or how the measured response is distributed among them.
Reconstructing the full process tensor would answer a much broader question and requires an informationally complete set of interventions \cite{WhitePRXQ2022}.
The response spectrum provides an intermediate description obtained directly from the selected experiments.
Its rank tells us whether the measured response remains confined to the one-direction memoryless baseline, while its singular vectors and relative weights resolve how the accessible response is organized.
Repeating the construction after a specified intervention tracks the persistence of those patterns.
The spectrum alone, however, may not be sufficient to determine whether their physical origin is classical or quantum.

We first establish the rank-one memoryless baseline, then determine what the remaining spectrum reveals about accessible memory modes, the unreconstructed process, and their persistence under interventions.
We finally identify the additional information needed to certify quantum memory.
Throughout, our conclusions concern only the memory visible to the chosen tester protocol.

Quantifying the non-Markovian memory via SVD across a past-future cut has appeared in numerical modelling contexts~\cite{PhysRevLett.120.240502, PhysRevA.106.022411, Guo2022, 10.3389/frqst.2025.1568471, VilkoviskiyPRB2024}. Our approach estimates the same quantity in a black-box setting.

\subsection{Detecting memory: the rank-one baseline}
\label{sec:rank-one-baseline}

To interpret the response spectrum as memory, we first identify its memoryless baseline.
If a causal break removes all tested dependence on the past, every history branch with nonzero probability produces the same normalized future response.
The columns of \(V_c\) then point in the same direction and differ only in their history probabilities.
We now show that the converse also holds, making this one-direction structure an exact test of conditional memorylessness within the chosen probes.

Let \(\langle n\rvert\) be the normalization functional associated with a deterministic future tester.
For an exact physical response, \(q_{\mathbf x}:=\langle n\rvert V_c\lvert\mathbf x\rangle\) is the branch probability of each history record \(\mathbf x\).
Whenever \(q_{\mathbf x}>0\), define the normalized future-response vector
\begin{equation}
    \lvert\phi_{\mathbf x}\rangle
    :=
    \frac{V_c\lvert\mathbf x\rangle}{q_{\mathbf x}}.
\end{equation}
Dividing by $q_{\mathbf x}$ removes the probability of obtaining that history, leaving the future response conditioned on it.
The tested process is conditionally memoryless at cut \(c\) when \(\lvert\phi_{\mathbf x}\rangle\) is independent of the nonzero-probability history record.
\begin{proposition}[Rank-one response and conditional memorylessness]
\label{prop:rank-one-response}
Let \(V_c\neq0\) be an exact response matrix satisfying \(q_{\mathbf x}=\langle n\rvert V_c\lvert\mathbf x\rangle\), and assume that \(V_c\lvert\mathbf x\rangle=0\) whenever \(q_{\mathbf x}=0\).
Then \(\operatorname{rank}(V_c)=1\) if and only if the normalized future response is the same for every tested history with nonzero probability.
\end{proposition}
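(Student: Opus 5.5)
The argument is pure linear algebra on the columns of \(V_c\), using only the two hypotheses: \(q_{\mathbf x}=\langle n\rvert V_c\lvert\mathbf x\rangle\), and zero-probability columns vanish. We prove the two directions separately.

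For the ``if'' direction, suppose there is a single vector \(\lvert\phi\rangle\) with \(\lvert\phi_{\mathbf x}\rangle=\lvert\phi\rangle\) for every \(\mathbf x\) with \(q_{\mathbf x}>0\). Then every column satisfies \(V_c\lvert\mathbf x\rangle=q_{\mathbf x}\lvert\phi\rangle\). For \(q_{\mathbf x}>0\) this is the definition of \(\lvert\phi_{\mathbf x}\rangle\). For \(q_{\mathbf x}=0\) it is the vanishing assumption. Hence
\begin{equation}
V_c=\lvert\phi\rangle\sum_{\mathbf x}q_{\mathbf x}\langle\mathbf x\rvert ,
\end{equation}
so \(\operatorname{rank}(V_c)\le 1\). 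Since \(V_c\neq0\), the rank is exactly one. Note that \(V_c\neq0\) forces at least one \(q_{\mathbf x}>0\), because otherwise every column would vanish; so the common \(\lvert\phi\rangle\) actually exists.

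For the ``only if'' direction, suppose \(\operatorname{rank}(V_c)=1\). Then we can write \(V_c=\lvert u\rangle\langle w\rvert\) with \(\lvert u\rangle\neq0\), and each column is \(V_c\lvert\mathbf x\rangle=w_{\mathbf x}^{*}\lvert u\rangle\), where we set \(w_{\mathbf x}^{*}:=\langle w\vert\mathbf x\rangle\). Applying \(\langle n\rvert\) gives \(q_{\mathbf x}=w_{\mathbf x}^{*}\langle n\vert u\rangle\).

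The step that needs care is showing \(\langle n\vert u\rangle\neq0\). Since \(V_c\neq0\), some column is nonzero, and by the vanishing hypothesis that column has \(q_{\mathbf x}\neq0\). From \(q_{\mathbf x}=w_{\mathbf x}^{*}\langle n\vert u\rangle\) it follows that \(\langle n\vert u\rangle\neq0\). This is where the contrapositive of the vanishing assumption is essential. Without it, a column with \(q_{\mathbf x}=0\) but nonzero response could be orthogonal to \(\langle n\rvert\).

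With \(\langle n\vert u\rangle\neq0\) established, every \(\mathbf x\) with \(q_{\mathbf x}>0\) satisfies
\begin{equation}
\lvert\phi_{\mathbf x}\rangle=\frac{w_{\mathbf x}^{*}\lvert u\rangle}{w_{\mathbf x}^{*}\langle n\vert u\rangle}=\frac{\lvert u\rangle}{\langle n\vert u\rangle}.
\end{equation}
The cancellation of \(w_{\mathbf x}^{*}\) is legitimate because \(q_{\mathbf x}>0\) implies \(w_{\mathbf x}^{*}\neq0\). The resulting vector is independent of \(\mathbf x\), which completes the equivalence. The proof uses only linearity, so it holds equally for signed observable representations, provided the stated assumptions on \(q_{\mathbf x}\) hold.
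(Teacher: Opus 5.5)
Your proof is correct and follows the paper's argument essentially step for step: the ``if'' direction uses the vanishing of zero-probability columns to factor $V_c=\lvert\phi\rangle\langle q\rvert$, and the ``only if'' direction writes $V_c$ as an outer product and divides by $q_{\mathbf x}=\langle n\vert u\rangle\langle w\vert\mathbf x\rangle$. One minor overstatement is that the vanishing hypothesis is not essential in the ``only if'' direction. There, $q_{\mathbf x}>0$ by itself forces both $\langle n\vert u\rangle$ and $w^{*}_{\mathbf x}$ to be nonzero for every history you divide by, and if no such history existed the conclusion would hold vacuously; the hypothesis is genuinely needed only in the ``if'' direction.
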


\begin{proof}
If every history with nonzero probability gives the same normalized response \(\lvert\phi\rangle\), then \(V_c\lvert\mathbf x\rangle=q_{\mathbf x}\lvert\phi\rangle\).
By assumption, histories with zero probability give zero columns, and therefore
\begin{equation}
    V_c
    =
    \lvert\phi\rangle\langle q\rvert,
    \qquad
    \langle q\rvert
    :=
    \sum_{\mathbf x}q_{\mathbf x}\langle\mathbf x\rvert,
\end{equation}
which has rank one because \(V_c\neq0\).
Conversely, write a rank-one response as \(V_c=\lvert a\rangle\langle b\rvert\).
Since \(q_{\mathbf x}=\langle n\vert a\rangle\langle b\vert\mathbf x\rangle\), every history with \(q_{\mathbf x}>0\) satisfies
\begin{equation}
    \frac{V_c\lvert\mathbf x\rangle}{q_{\mathbf x}}
    =
    \frac{\lvert a\rangle}{\langle n\vert a\rangle},
\end{equation}
which is independent of \(\mathbf x\).
\end{proof}

Thus, rank greater than one certifies history dependence visible to the chosen controls and readouts.
Rank one establishes only tester-relative memorylessness unless the frames span the relevant Liouville spaces, that is, unless they are informationally complete.
With such frames, rank one implies process factorization at this cut, where global Markovianity additionally requires this condition at every relevant cut and for all admissible interventions.
Within these limits, the rank gives a direct memory test such that a response matrix with a rank greater than one detects memory, while rank one identifies the conditionally memoryless baseline for the chosen probes.
Thus, response rank greater than one subsumes the role of a memory witness with respect to the selected probes.

\subsection{Resolving memory: response modes and spectral complexity}
\label{sec:memory-spectrum-modes}

Having used the rank for a yes-or-no memory test, we turn to the full response spectrum to resolve the structure of the measured response.
When the rank exceeds one, we use the spectrum to determine how many independent response directions the chosen probes resolve and how the response is distributed among them.
In the SVD of~\cref{eq:response-svd}, each right singular vector selects a linear combination of tested histories, while the corresponding left singular vector gives the associated future response.
We call each such pair a response mode.
Because singular vectors combine multiple probes, individual modes need not represent physical interventions.

We use the SVD's optimal low-rank approximation \cite{Eckart1936} to quantify how much of the measured response each mode carries.
Retaining the \(k\) largest modes gives the best rank-\(k\) representation, while discarding the remaining modes incurs squared Frobenius error \(\sum_{m>k}\sigma_m^2\).
Thus \(\sigma_m^2\) gives the contribution of mode \(m\) to the total squared response.
For \(V_c\neq0\), we normalize these contributions to define the mode weights
\begin{equation}
    p_m
    :=
    \frac{\sigma_m^2}{\sum_j\sigma_j^2}
    =
    \frac{\sigma_m^2}{\lVert V_c\rVert_{\mathrm F}^2}.
    \label{eq:mode_weights}
\end{equation}
The weight \(p_m\) is the fraction of the squared Frobenius norm carried by mode \(m\).
We summarize this distribution by defining the response entropy and effective mode number:
\begin{equation}
    S_V
    :=
    -\sum_m p_m\log p_m,
    \qquad
    R
    :=
    \exp(S_V).
    \label{eq:SV_formal}
\end{equation}
We use \(S_V\) to measure how evenly the response is distributed across its modes, while \(R\) gives the corresponding effective mode number.
These definitions give \(0\leq S_V\leq\log r\) and \(1\leq R\leq r\) such that a rank-one response has \(S_V=0\) and \(R=1\), while \(R=r\) when all nonzero modes have equal weight.
More generally, \(R\) is the number of equally weighted modes that would give the same entropy, so a value near one means that one mode dominates.
Thus \(S_V\) and \(R\) describe how the response is distributed among modes, not its total size.
For a higher-rank response, the leading singular term is the best rank-one approximation, but it need not define a physical Markov process.

\Cref{app:classical_memory_entropy} illustrates this interpretation with a bath that stores a classical bit across a system reset.
When the probes resolve both bit values, $V_c$ has two nonzero modes, provided both values occur.
Equally likely values give $S_V=\ln2$, while a strongly biased bit gives a small entropy despite perfect recall.
Ignoring the history outcome leaves a single response column and $S_V=0$, although the bath dynamics are unchanged.
Thus, $S_V$ describes the distribution of response weight among the patterns resolved by the chosen probes.
Changing the probe set or its relative sampling and weighting can change this distribution even for a fixed process.
The numerical results below therefore characterize memory relative to their specified probe protocols.

We estimate the response spectrum directly from a finite set of causal break measurements.
Increasing the number of sampled history and future probes tests the convergence of \(S_V\) and \(R\), while repeated measurements improve the precision of \(V_c\).
We retain singular values above the numerical or statistical resolution as resolved modes and use consistent sampling and weighting across cuts to compare how the accessible memory structure changes throughout the process.

We next ask when the rank and spectrum measured from \(V_c\) equal those of the underlying process tensor.

\subsection{Inferring process structure without reconstruction}
\label{sec:pt-mpo-connection}

The rank of \(V_c\) describes the measured response.
Under suitable probe conditions, it can equal the corresponding rank of the process tensor, and its relative singular-value weights can equal those of a PT-MPO.
We now state the conditions for both results.

The two ranks apply to different matrices.
The response rank \(r=\operatorname{rank}(V_c)\) counts the independent response patterns distinguished by the chosen probes.
To define the corresponding quantity for the process tensor, we reshape it into the matrix \(\Upsilon_{F\leftarrow H}\).
The rank of this matrix is the operator-Schmidt rank, \(r_{\mathrm{OS}}^{(c)}:=\operatorname{rank}(\Upsilon_{F\leftarrow H}) =\operatorname{OSR}_{H|F}(\UpsilonPT_{HF})\).
It counts the independent history--future directions present in the process, including those that the chosen probes may not resolve.
The reshaped process has the singular-value decomposition
\begin{equation}
    \Upsilon_{F\leftarrow H}
    =
    \sum_{m=1}^{r_{\mathrm{OS}}^{(c)}} s_m
    \lvert a_m\rangle\!\rangle
    \langle\!\langle b_m\rvert,
    \qquad
    s_1\geq\cdots\geq s_{r_{\mathrm{OS}}^{(c)}}>0.
\end{equation}
The response has singular values \(\sigma_m\).
An exact PT-MPO in mixed-canonical form at the same temporal cut has nonnegative bond coefficients \(\lambda_m\).
These coefficients equal \(s_m\) when the PT-MPO represents the same normalized process and uses the same history--future division, index ordering, and vectorization as the response matrix.
We define the normalized PT-MPO bond weights and their entropy by
\begin{equation}
    \begin{aligned}
        p_m^{\mathrm{MPO}}
        &:=
        \frac{\lambda_m^2}{\sum_j\lambda_j^2},\\
        S_{\mathrm{MPO}}
        &:=
        -\sum_m p_m^{\mathrm{MPO}}\log p_m^{\mathrm{MPO}}.
    \end{aligned}
\end{equation}

The probe matrices can remove process directions or change their relative weights.
The following result states when they preserve the rank and spectrum.
\begin{proposition}[Recovery of process rank and spectrum]
\label{prop:response-process-recovery}
Let \(V_c=F\Upsilon_{F\leftarrow H}H^{\mathsf T}\), and let \(\chi_c\) be the bond dimension across cut \(c\) of an exact PT-MPO representation of the same process.
For arbitrary probe frames,
\begin{equation}\label{eq:response-process-rank}
    \begin{gathered}
        \operatorname{rank}(V_c)
        \leq
        \min\!\left\{
            \operatorname{rank}(F),
            \operatorname{rank}(H),
            r_{\mathrm{OS}}^{(c)}
        \right\},\\
        r_{\mathrm{OS}}^{(c)}\leq\chi_c.
    \end{gathered}
\end{equation}
If the probes span both operator spaces, then
\begin{equation}
    \operatorname{rank}(V_c)=r_{\mathrm{OS}}^{(c)}.
\end{equation}
Suppose further that positive constants \(\alpha\) and \(\beta\) satisfy
\begin{equation}
    F^\dagger F=\alpha\mathds{1}_{D_F},
    \qquad
    H^\dagger H=\beta\mathds{1}_{D_H}.
\end{equation}
The response and process singular values then obey
\begin{equation}\label{eq:response-process-singular-values}
    \sigma_m=\sqrt{\alpha\beta}\,s_m.
\end{equation}
Under the PT-MPO conventions above, this also gives
\begin{equation}
    p_m=p_m^{\mathrm{MPO}},
    \qquad
    S_V=S_{\mathrm{MPO}}.
\end{equation}
\end{proposition}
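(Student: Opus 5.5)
The plan is to work entirely from the tester-frame factorization \cref{eq:respmat}, \(V_c=F\Upsilon_{F\leftarrow H}H^{\mathsf T}\), and treat each claim as a statement about products of matrices.

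For the first inequality in \cref{eq:response-process-rank}, I will use the elementary bound \(\operatorname{rank}(XYZ)\leq\min\{\operatorname{rank}X,\operatorname{rank}Y,\operatorname{rank}Z\}\). I will also note that \(\operatorname{rank}(H^{\mathsf T})=\operatorname{rank}(H)\), and that \(\operatorname{rank}(\Upsilon_{F\leftarrow H})=r_{\mathrm{OS}}^{(c)}\) by definition.

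For \(r_{\mathrm{OS}}^{(c)}\leq\chi_c\), I will write an exact PT-MPO with bond dimension \(\chi_c\) across cut \(c\) as a contraction of a left block and a right block over a single bond index. This gives \(\UpsilonPT_{HF}=\sum_{m=1}^{\chi_c}L_m\otimes R_m\), with \(L_m\) acting on the history legs and \(R_m\) on the future legs. Under the reshaping \cref{eq:process-reshaping}, each product \(L_m\otimes R_m\) becomes the rank-one matrix \(\lvert R_m\rangle\!\rangle\langle\!\langle L_m^*\rvert\), up to the fixed index conventions. Hence \(\Upsilon_{F\leftarrow H}\) is a sum of \(\chi_c\) rank-one terms and has rank at most \(\chi_c\).

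For the spanning case, "the probes span both operator spaces" means the vectors \(\lvert(f^{(\mathbf y)})^*\rangle\!\rangle\) span \(\mathsf L_F\), and similarly for \(H\). Thus \(F\) and \(H\) have full column rank \(D_F\) and \(D_H\). In that case \(F\) is injective and has a left inverse \(F^+\). Likewise \(H^{\mathsf T}\) is surjective onto \(\mathsf L_H\) and has a right inverse \((H^{\mathsf T})^+\). From
\[
F^+V_c(H^{\mathsf T})^+=\Upsilon_{F\leftarrow H}
\]
I get \(r_{\mathrm{OS}}^{(c)}\leq\operatorname{rank}(V_c)\). Combined with the inequality above, this gives equality.

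For the singular values, I will define the rescaled maps \(U:=F/\sqrt{\alpha}\) and \(W:=\bar H/\sqrt{\beta}\). These satisfy \(U^\dagger U=\mathds 1\) and \(W^\dagger W=\overline{H^\dagger H}/\beta=\mathds 1\), so both are isometries. Then
\[
V_c=\sqrt{\alpha\beta}\,U\Upsilon_{F\leftarrow H}W^\dagger,
\]
since \(H^{\mathsf T}=(\bar H)^\dagger\). Substituting the SVD of \(\Upsilon_{F\leftarrow H}\) gives
\[
V_c=\sum_m\sqrt{\alpha\beta}\,s_m\,(U\lvert a_m\rangle\!\rangle)(W\lvert b_m\rangle\!\rangle)^\dagger .
\]
Isometries preserve orthonormality, so this expression is itself a valid SVD, and uniqueness of singular values gives \(\sigma_m=\sqrt{\alpha\beta}\,s_m\). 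Equivalently, \(V_c^\dagger V_c\) is unitarily similar to \(\alpha\beta\,\Upsilon^\dagger\Upsilon\) on the relevant subspace.

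The common factor \(\alpha\beta\) cancels in the normalized weights \cref{eq:mode_weights}, so \(p_m=s_m^2/\sum_j s_j^2\). Under the stated PT-MPO conventions \(\lambda_m=s_m\), hence \(p_m=p_m^{\mathrm{MPO}}\) and \(S_V=S_{\mathrm{MPO}}\).

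The only step needing care is the PT-MPO bound, specifically checking that the operator-sum split at the bond is compatible with the reshaping and transposition conventions. I would handle this by verifying the claim on a single product term \(L\otimes R\) directly from \cref{eq:process-reshaping}.
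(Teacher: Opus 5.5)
Your proposal is correct and follows the paper's proof step by step. It uses the product-rank bound, left and right inverses of $F$ and $H^{\mathsf T}$ in the spanning case, and rescaling of the frames to isometries so that $V_c=\sqrt{\alpha\beta}\,Q_F\Upsilon_{F\leftarrow H}Q_H^{\mathsf T}$ keeps the nonzero singular values up to the common factor. The only difference is that you prove $r_{\mathrm{OS}}^{(c)}\leq\chi_c$ directly, by splitting the PT-MPO at the bond into $\chi_c$ product terms that each reshape to a rank-one matrix, whereas the paper just cites that the operator-Schmidt rank is the minimal bond dimension across the history--future division.
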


\begin{proof}
The rank of a matrix product cannot exceed the rank of any factor.
Applying this to \(V_c=F\Upsilon_{F\leftarrow H}H^{\mathsf T}\) gives the first bound in~\cref{eq:response-process-rank}.
The operator-Schmidt rank is the minimum bond dimension across the history--future division, which gives the second bound \cite{KeelingPRX2026,Cygorek2025}.

When the probes span both spaces, \(F\) has a left inverse \(L_F\), and \(H^{\mathsf T}\) has a right inverse \(R_H\).
Thus \(L_FF=\mathds{1}_{D_F}\) and \(H^{\mathsf T}R_H=\mathds{1}_{D_H}\).
Applying these inverses to \(V_c=F\Upsilon_{F\leftarrow H}H^{\mathsf T}\) gives \(\Upsilon_{F\leftarrow H}=L_FV_cR_H\).
Its rank cannot exceed that of \(V_c\), which combines with~\cref{eq:response-process-rank} to prove rank equality.

For the spectrum, set \(Q_F:=F/\sqrt{\alpha}\) and \(Q_H:=H/\sqrt{\beta}\).
Then \(Q_F^\dagger Q_F=\mathds{1}_{D_F}\) and \(Q_H^{\mathsf T}Q_H^*=\mathds{1}_{D_H}\), so \(Q_F\) has orthonormal columns and \(Q_H^{\mathsf T}\) has orthonormal rows.
Multiplication by these matrices preserves the nonzero singular values.
Substituting \(V_c=\sqrt{\alpha\beta}\,Q_F\Upsilon_{F\leftarrow H}Q_H^{\mathsf T}\) proves \cref{eq:response-process-singular-values}.
Finally, \(\lambda_m=s_m\), and the common factor in \(\sigma_m=\sqrt{\alpha\beta}\,s_m\) cancels when we normalize the squared singular values.
This gives the stated equalities for \(p_m\) and \(S_V\).
\end{proof}

The two assumptions play distinct roles.
Spanning probes make the response rank equal the operator-Schmidt rank.
Equal directional weighting also makes the normalized response spectrum equal the PT-MPO bond spectrum.
Incomplete probe spans can miss process directions, and unequal weighting can change their relative spectral weights.
The sampled frames used below are not assumed to satisfy the equal-directional-weighting condition, so their response and PT-MPO spectra need not coincide.
For a truncated PT-MPO, the rank and spectrum refer to the approximated process rather than the exact \(\Upsilon_{F\leftarrow H}\).
These recovery conditions concern a fixed history--future division.
They do not tell us whether the observed history dependence survives when we intervene before probing the future.

\subsection{Memory persistence under interventions}
\label{sec:memory-length}

In the temporal setting, memory persistence is not set by the size of the bond, as is the case for MPS. This is due to the fact that information can traverse across the system channel as well as the memory.
Memory persistence asks how long the past continues to influence the future under a specified sequence of interventions.
An instrument-specific memory length is the smallest number of intervening steps for which a chosen block instrument makes the future independent of the history for every outcome that can occur.
This exact condition may never be reached, but we can still track how the remaining dependence changes as the block grows.
Different interventions can preserve or erase different parts of the memory, so both questions depend on the chosen instrument~\cite{TarantoPRL2019,TarantoPRA2019,Taranto3}.

The response matrix makes this test operational.
We keep the selected history and future probes fixed, place an intervention block \(M\) between them, and condition on one outcome of that block.
The resulting matrix has the same history columns and future rows as before, so changes across increasing block lengths isolate the effect of the intervening control.
At each length, we test whether every nonzero outcome branch has reached the rank-one baseline.
Repeating this construction for longer blocks distinguishes complete memory removal from continued, but changing, history dependence.

Formally, let \(\mathcal J_M=\{\mathbf A_M^{(\mathbf z)}\}\) be an instrument on a block \(M\), whose elements sum to a deterministic multi-time operation.
Conditioning on outcome \(\mathbf z\) gives the process and response matrix
\begin{gather}\label{eq:respmatcond}
    \begin{split}
    \widetilde{\UpsilonPT}_{HF|\mathbf z,\mathcal J_M}^{\mathsf T}
    &:=
    \Tr_M\!\left[
        \mathbf A_M^{(\mathbf z)}
        \UpsilonPT_{HMF}^{\mathsf T}
    \right], \\
    V_{c|\mathbf z,\mathcal J_M}
    &:=
    F\widetilde{\Upsilon}_{F\leftarrow H|\mathbf z,\mathcal J_M}H^{\mathsf T}.
    \end{split}
\end{gather}
The instrument removes memory when every nonzero conditional process factorizes as
\begin{equation}
    \widetilde{\UpsilonPT}_{HF|\mathbf z,\mathcal J_M}^{\mathsf T}
    =
    \widetilde{\UpsilonPT}_{H|\mathbf z,\mathcal J_M}^{\mathsf T}
    \otimes
    \UpsilonPT_{F|\mathbf z,\mathcal J_M}^{\mathsf T}.
    \label{eq:conditional-factorization}
\end{equation}
We causally normalize the future factor, leaving the outcome weight in the history factor.
\begin{corollary}[Response criterion for instrument-specific memory length]
\label{cor:response-memory-length}
With informationally complete history and future frames, an instrument removes memory between \(H\) and \(F\) if and only if every nonzero outcome branch produces a rank-one conditional response matrix.
For a specified nested family of length-\(\ell\) blocks at a fixed position, the smallest \(\ell\) that satisfies this condition is the instrument-specific memory length at that position.
\end{corollary}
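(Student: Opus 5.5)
The plan is to reduce the corollary to Proposition~\ref{prop:response-process-recovery}, applied branch by branch to the conditional processes of \cref{eq:respmatcond}. The first step is to observe that for a fixed outcome \(\mathbf z\) the conditional response has the same tester-frame factorization as the unconditioned one, namely \(V_{c|\mathbf z,\mathcal J_M}=F\widetilde{\Upsilon}_{F\leftarrow H|\mathbf z,\mathcal J_M}H^{\mathsf T}\), with the same frames \(F\) and \(H\). Partial contraction with \(\mathbf A_M^{(\mathbf z)}\) changes only the middle factor. Because the frames are informationally complete, the rank-equality clause of Proposition~\ref{prop:response-process-recovery} then gives
\begin{equation}
\operatorname{rank}(V_{c|\mathbf z,\mathcal J_M})=\operatorname{rank}(\widetilde{\Upsilon}_{F\leftarrow H|\mathbf z,\mathcal J_M}).
\end{equation}
The argument for this uses only the left and right inverses \(L_F\) and \(R_H\), so it does not depend on the process being normalized.

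The second step links the rank to factorization. Under the reshaping of \cref{eq:process-reshaping}, a nonzero operator \(X_{HF}\) has operator-Schmidt rank one exactly when \(X_{HF}=X_H\otimes X_F\). Such a product reshapes to the outer product \(\lvert X_F\rangle\!\rangle\langle\!\langle X_H^*\rvert\), up to the conjugation convention, and conversely any rank-one reshaped matrix unreshapes to a product. The transpose in \cref{eq:conditional-factorization} is a product-basis transpose, so it also distributes over the tensor factors. A nonzero branch \(\mathbf z\) therefore has a factorizing conditional process if and only if its conditional response has rank one. I would also fix the normalization: the future factor can be rescaled to be causally normalized, with the scalar absorbed into the history factor. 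This matches the convention stated before the corollary and costs nothing, because the weight is nonzero on a nonzero branch.

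For the ``if and only if'', I take ``nonzero outcome branch'' to mean \(\widetilde{\Upsilon}_{HF|\mathbf z}\neq0\). By the injectivity of the frames, this is equivalent to \(V_{c|\mathbf z}\neq0\), so the zero branches are the same on both sides. The quantifier ``every nonzero branch'' then matches on both sides, which establishes the first sentence of the corollary. The memory-length statement follows by definition. The instrument-specific memory length is the smallest \(\ell\) in the nested family for which the block instrument removes memory. By the first part, this is exactly the smallest \(\ell\) for which all nonzero branches give rank-one responses, and the minimum is well defined whenever the set of such \(\ell\) is nonempty.

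I expect the main obstacle to be bookkeeping rather than any deep step. One issue is checking that conditioning via \(\Tr_M\) commutes with the \(H|F\) reshaping, so that the frames really act unchanged on the conditional operator. The other is confirming that rank one is equivalent to the product form, including the complex-conjugation and transpose conventions of the bilinear pairing. I would settle both by writing the matrix elements explicitly in the fixed product basis.
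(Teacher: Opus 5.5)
Your proposal is correct and follows essentially the same route as the paper. The paper's proof is the same two-step argument: the factorization in \cref{eq:conditional-factorization} is equivalent to a rank-one reshaping, and informationally complete frames preserve rank via the left and right inverses of \cref{prop:response-process-recovery}. Your handling of the zero branches, the product-basis transpose, and the definitional memory-length step makes explicit what the paper leaves implicit.
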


\begin{proof}
The factorization in~\cref{eq:conditional-factorization} is equivalent to a rank-one reshaping.
Informationally complete frames preserve this rank, so the conditional process factorizes if and only if its response matrix has rank one.
\end{proof}

For a nested family of blocks \(M_\ell\), denote the response for a chosen outcome record by \(V_\ell:=V_{c|\mathbf z_\ell,\mathcal J_{M_\ell}}\).
With the history and future frames fixed, a rank above one shows that tested history dependence survives the block, while the entropy \(S_V(\ell)\) tracks how the surviving response is distributed among its modes.
This sequence gives the persistence profile for the chosen intervention and outcome record.

With a finite set of frames, rank one guarantees independence only within the chosen probes.
A process-wide Markov order further requires the all-outcome condition at every relevant placement of the instrument.
The response profile therefore tells us how memory survives the chosen intervention, but not whether that memory is classical or quantum.
We address that distinction next.

\subsection{Certifying quantum memory}
\label{sec:quantum-memory-certification}

The response spectrum tells us whether the future retains a dependence on the tested history and how that dependence is distributed among the resolved response modes.
It does not, by itself, identify with certainty the physical nature of the memory that carries this dependence.
A classical environmental record can produce several response modes and a nonzero response entropy, just as a coherent quantum memory can.
Appendix~\ref{app:classical_memory_entropy} illustrates this with a single bath qubit where information passing across the causal break can be classical, and still give a non-trivial response spectrum.
We can therefore ask a stronger question: can the response matrix provide us a witness that can be used to witness quantum memory in our process being probed?

Let $\Upsilon_{HF}\geq 0$ denote the Choi operator of the process, with all physical Choi legs before the selected causal break grouped into $H$ and all legs after it grouped into $F$.
The measure--prepare operation at the causal break has the product Choi form $E_i^{T}\otimes\sigma_p$ and hence does not itself carry quantum correlations from $H$ to $F$.
Any correlation detected across this division is therefore attributed to the process sampled by the chosen history and future probes.
This temporal partition is essential because splitting the input and output legs of an ordinary channel can reveal channel Choi entanglement even for a memoryless process~\cite{MilzSciPost2021}.

To make ``classical memory'' explicit, we follow the separability across the selected temporal cut as the necessary condition for a classical feed-forward memory model~\cite{Peres1996Separability,Giarmatzi2021},
\begin{equation}
    \Upsilon_{HF}^{\mathrm{cl}} = \sum_{\lambda} R_\lambda^H \otimes S_\lambda^F, \qquad R_\lambda^H, S_\lambda^F \succeq 0.
    \label{eq:classical-memory-separable}
\end{equation}
Here, the classical label $\lambda$ carries information from the history to the future, while the conditional operators on either side of the cut can remain quantum.
Different operational classical memory, mixtures of memoryless processes, and separable process tensors can define classes beyond two times~\cite{Taranto2024}, while alternative notions based on Kolmogorov-consistent sequential statistics have also been discussed in previous literature~\cite{Strasberg2019,MilzPRX2020}.

We therefore search for a witness nonnegative on the cut-separable set in Eq~\eqref{eq:classical-memory-separable}, which contains the classical memory processes considered here.
A negative witness value for the response matrix then rules out any classical feed-forward decomposition across \(H:F\), thereby certifying quantum memory accessible to the chosen instruments.
If no such negativity is found, the result remains inconclusive, since the process may contain quantum memory that the selected probes or witness fail to detect.

We can establish entanglement across \(H:F\) using the positive-map criterion.
Let \(\Lambda_F:\mathcal L(\mathcal H_F)\to\mathcal L(\mathcal H_{F'})\) be a map that sends positive operators to positive operators.
For every separable operator in Eq.~\eqref{eq:classical-memory-separable},
\begin{equation}
  (\mathds{1}_H\otimes\Lambda_F)
  (\Upsilon_{HF}^{\mathrm{cl}})
  =\sum_{\lambda}R_{\lambda}^{H}\otimes
  \Lambda_F(S_{\lambda}^{F})\geq0 .
  \label{eq:positive-map-on-classical-process}
\end{equation}
If a vector \(\lvert\nu\rangle\in\mathcal H_H\otimes\mathcal H_{F'}\) satisfies
\begin{equation}
    \langle\nu\rvert
    (\mathds{1}_H\otimes\Lambda_F)(\UpsilonPT_{HF})
    \lvert\nu\rangle <0,
    \label{eq:positive-map-negative-direction}
\end{equation}
then $\Upsilon_{HF}$ cannot have the form Eq.~\eqref{eq:classical-memory-separable}, and so, is entangled across the cut~\cite{Horodecki1996}.

Defining $Q:=|\nu\rangle\langle\nu|\geq0$ and using the Hilbert--Schmidt adjoint gives the witness
\begin{align}
  W&=(\mathds{1}_H\otimes\Lambda_F^{\dagger})(Q),
  \\
  \operatorname{Tr}(W\Upsilon_{HF})
  &=\operatorname{Tr}\!\left[
  Q(\mathds{1}_H\otimes\Lambda_F)(\Upsilon_{HF})\right].
  \label{eq:positive-map-witness}
\end{align}
The right-hand side is nonnegative for every separable process because it is the trace of a product of two positive operators, whereas Eq.~\eqref{eq:positive-map-negative-direction} makes it negative for the tested process, violating the assumption in Eq~\eqref{eq:classical-memory-separable}.

The response matrix does not provide arbitrary expectation values of the full process.
It provides the fixed probe contractions
\begin{equation}
  B_{xy}:=\bigl[h^{(x)}\bigr]^T\otimes\bigl[f^{(y)}\bigr]^T,
  \qquad
  (V_c)_{yx}=\operatorname{Tr}(B_{xy}\Upsilon_{HF}),
  \label{eq:response-probe-operators}
\end{equation}
and therefore accesses the real operator space $\mathcal S_{\mathrm{acc}}:=\operatorname{span}_{\mathbb R}\{B_{xy}\}$.
A valid negative witness in this subspace certifies quantum memory in the full process without reconstructing $\Upsilon_{HF}$.
Failure to find one is inconclusive: the process may be separable, the chosen positive map may miss its entanglement, or the selected probes may not resolve the required witness.
Conversely, projecting an inaccessible witness into $\mathcal S_{\mathrm{acc}}$, or partially transposing a pseudoinverse reconstruction without independently validating it, need not preserve the witness inequalities and can lead to a false certificate.

\begin{proposition}[Response-matrix quantum-memory witness]
\label{prop:response-quantum-memory-witness}
Let the history and future probes be Hermitian, and let \(c_{\mathbf x\mathbf y}\in\mathbb R\).
Define the accessible Hermitian operator
\begin{equation}
    W_C
    :=
    \sum_{\mathbf x,\mathbf y}
    c_{\mathbf x\mathbf y}
    \bigl[h^{(\mathbf x)}\bigr]^{\mathsf T}
    \otimes
    \bigl[f^{(\mathbf y)}\bigr]^{\mathsf T}.
    \label{eq:response-space-witness}
\end{equation}
Suppose that \(W_C\) is nonnegative on every positive operator $R_\lambda^H,S_\lambda^F\succeq0$ separable across the temporal cut \(H:F\), i.e.,
\begin{equation}
    \operatorname{Tr}(W_C\Sigma_{HF})\geq0
    \quad
    \forall \quad
    \Sigma_{HF}
    =
    \sum_{\lambda}R_\lambda^H\otimes S_\lambda^F.
    \label{eq:response-witness-separable-bound}
\end{equation}
Then its expectation value on the process is obtained directly from the response matrix:
\begin{equation}
    \operatorname{Tr}(W_C\UpsilonPT_{HF})
    =
    \sum_{\mathbf x,\mathbf y}
    c_{\mathbf x\mathbf y}(V_c)_{\mathbf y\mathbf x}.
    \label{eq:response-witness-contraction}
\end{equation}
Consequently,
\begin{equation}
    \operatorname{Tr}(W_C\UpsilonPT_{HF})
    \begin{cases}
        <0,
        & \text{certifies quantum memory across \(H:F\)},\\[2pt]
        \geq0,
        & \text{is inconclusive for this witness}.
    \end{cases}
    \label{eq:response-witness-criterion}
\end{equation}
\end{proposition}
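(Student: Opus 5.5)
The proof has three parts: show that \(W_C\) is Hermitian, derive \cref{eq:response-witness-contraction} by linearity of the trace, and read off the dichotomy in \cref{eq:response-witness-criterion} from the separable bound \cref{eq:response-witness-separable-bound}.

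For Hermiticity, every probe \(h^{(\mathbf x)}\) and \(f^{(\mathbf y)}\) is Hermitian by assumption. The transpose of a Hermitian operator in the fixed product basis is again Hermitian, since \((A^{\mathsf T})^\dagger=(A^\dagger)^{\mathsf T}\). Tensor products of Hermitian operators are Hermitian, and a real linear combination of them stays Hermitian. Hence \(W_C\) is a Hermitian element of \(\mathcal S_{\mathrm{acc}}\).

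For the contraction identity, I expand \(\operatorname{Tr}(W_C\UpsilonPT_{HF})\) term by term. Each term is \(c_{\mathbf x\mathbf y}\operatorname{Tr}(B_{\mathbf x\mathbf y}\UpsilonPT_{HF})\), with \(B_{\mathbf x\mathbf y}\) as in \cref{eq:response-probe-operators}. I then identify this with \((V_c)_{\mathbf y\mathbf x}\). Since \((h\otimes f)^{\mathsf T}=h^{\mathsf T}\otimes f^{\mathsf T}\) in the product basis, and \(\operatorname{Tr}(X^{\mathsf T}Y)=\operatorname{Tr}(XY^{\mathsf T})\), we get
\begin{equation*}
\operatorname{Tr}\!\bigl[(h^{(\mathbf x)}\otimes f^{(\mathbf y)})^{\mathsf T}\UpsilonPT_{HF}\bigr]=\operatorname{Tr}\!\bigl[h^{(\mathbf x)}\otimes f^{(\mathbf y)}\UpsilonPT_{HF}^{\mathsf T}\bigr]=v_{\mathbf x\mathbf y}
\end{equation*}
by the Born rule \cref{eq:born-rule}. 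By \cref{eq:response-matrix-definition}, \(v_{\mathbf x\mathbf y}=(V_c)_{\mathbf y\mathbf x}\). Summing over \(\mathbf x,\mathbf y\) with the real coefficients \(c_{\mathbf x\mathbf y}\) gives \cref{eq:response-witness-contraction}.

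The one point needing care is that the \(H\otimes F\) ordering of the tensor factors in \(W_C\) must match the leg grouping used to define \(\UpsilonPT_{HF}\). This holds by construction of the frames. The identity is exact for any finite, possibly redundant, frame; no informational completeness is needed.

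For the criterion, \(\UpsilonPT_{HF}\succeq0\). Suppose the memory were classical feed-forward in the sense of \cref{eq:classical-memory-separable}. Then \(\UpsilonPT_{HF}\) would be one of the operators \(\Sigma_{HF}\) covered by \cref{eq:response-witness-separable-bound}, so \(\operatorname{Tr}(W_C\UpsilonPT_{HF})\geq0\). By contraposition, a negative value, which is computed from \(V_c\) alone via \cref{eq:response-witness-contraction}, excludes any decomposition of the form \cref{eq:classical-memory-separable}. This certifies memory that is entangled across \(H:F\).

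The causal-break slot contributes only the product factors \((E_c^{(i)})^{\mathsf T}\) and \(\sigma_c^{(p)}\), which sit on opposite sides of the cut. So the negativity is attributable to the process itself, not to the probes. For the nonnegative case, the separable set is a proper subset of the positive cone. A nonnegative value is therefore consistent with both separable and entangled processes and carries no conclusion.

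I expect the main obstacle to be the index and transpose bookkeeping in the contraction step. The rest is immediate once the separability bound is assumed rather than proved.
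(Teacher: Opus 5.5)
Your proposal is correct and follows essentially the same route as the paper. You identify each $(V_c)_{\mathbf y\mathbf x}$ with $\operatorname{Tr}(B_{\mathbf x\mathbf y}\Upsilon_{HF})$ via the Born rule and the transpose identity, sum with the real coefficients by linearity, and obtain the criterion by contraposition of the separable bound; your Hermiticity check is a harmless addition, though for the inconclusive case the relevant reason is that a single witness need not be negative on every entangled operator (as the paper says), not merely that the separable set is a proper subset of the positive cone.
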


\begin{proof}
For each pair of probes, the process Born rule and the identity \(\operatorname{Tr}(AB^{\mathsf T}) =\operatorname{Tr}(A^{\mathsf T}B)\) give
\begin{equation*}
    (V_c)_{\mathbf y\mathbf x}
    =
    \operatorname{Tr}\!\left[
        \left(
        \bigl[h^{(\mathbf x)}\bigr]^{\mathsf T}
        \otimes
        \bigl[f^{(\mathbf y)}\bigr]^{\mathsf T}
        \right)
        \UpsilonPT_{HF}
    \right].
\end{equation*}
Multiplying these equalities by the real coefficients \(c_{\mathbf x\mathbf y}\), summing over the probe labels, and using the linearity of the trace yields
\begin{align*}
    \sum_{\mathbf x,\mathbf y}
    c_{\mathbf x\mathbf y}(V_c)_{\mathbf y\mathbf x}
    &=
    \operatorname{Tr}\!\left[
        \left(
        \sum_{\mathbf x,\mathbf y}
        c_{\mathbf x\mathbf y}
        \bigl[h^{(\mathbf x)}\bigr]^{\mathsf T}
        \otimes
        \bigl[f^{(\mathbf y)}\bigr]^{\mathsf T}
        \right)
        \UpsilonPT_{HF}
    \right]
    \nonumber\\
    &=
    \operatorname{Tr}(W_C\UpsilonPT_{HF}).
\end{align*}

Now assume that the process admits the classical memory decomposition in \cref{eq:classical-memory-separable}.
By the defining property in \cref{eq:response-witness-separable-bound}, it would necessarily satisfy \(\operatorname{Tr}(W_C\UpsilonPT_{HF})\geq0\).
Therefore, an observed negative value contradicts every such decomposition and certifies entanglement across the selected temporal cut.
Since a single witness need not detect every entangled operator, a nonnegative value does not establish separability and remains inconclusive.
\end{proof}

A witness satisfying the proposition can be searched for directly within the measured probe span using semidefinite programs.
This formulation assumes exact, consistent response data, while treating statistical uncertainty is left to future work.
Let \(\Lambda_F\) be a fixed positive map acting on the future space, with Hilbert--Schmidt adjoint \(\Lambda_F^\dagger\).
Consider the semidefinite program
\begin{equation}
\begin{aligned}
  \underset{\{c_{\mathbf x\mathbf y}\},P,Q}{\operatorname{minimize}}
  \quad&
  \sum_{\mathbf x,\mathbf y}
  c_{\mathbf x\mathbf y}(V_c)_{\mathbf y\mathbf x}
  \\[2pt]
  \text{subject to}\quad&
  \sum_{\mathbf x,\mathbf y}
  c_{\mathbf x\mathbf y}B_{\mathbf x\mathbf y}
  =
  P+
  (\mathds{1}_H\otimes\Lambda_F^\dagger)(Q),
  \\
  &P\succeq0,\qquad Q\succeq0,
  \qquad c_{\mathbf x\mathbf y}\in\mathbb R,
  \\
  &\operatorname{Tr}P+\operatorname{Tr}Q\leq1.
\end{aligned}
\label{eq:response-witness-sdp}
\end{equation}

The operator on the left-hand side is precisely \(W_C\).
Hence the equality constraint simultaneously restricts the witness to \(\mathcal S_{\mathrm{acc}}\) and imposes a more general positive-map form
\begin{equation}
    W_C=P+(\mathds{1}_H\otimes\Lambda_F^\dagger)(Q).
    \label{eq:positive-map-witness-form}
\end{equation}

For any separable positive operator \(\Upsilon_{HF}^{\mathrm{cl}}\), positivity of \(\Lambda_F\) guarantees that \((\mathds{1}_H\otimes\Lambda_F)\Upsilon_{HF}^{\mathrm{cl}}\succeq0\).
Since \(P\) and \(Q\) are positive operators, the adjoint identity implies \(\operatorname{Tr}(W_C\Upsilon_{HF}^{\mathrm{cl}})\geq0\) for every feasible witness in \cref{eq:response-witness-sdp}.

The trace constraint fixes the otherwise arbitrary witness scale.
Indeed, if a feasible witness had a negative expectation value, multiplying \(P,Q\), and all \(c_{\mathbf x\mathbf y}\) by an arbitrarily large positive number would preserve the other constraints while driving the objective to \(-\infty\).
Since \(P, Q\succeq0\), their traces equal their trace norms, so \(\operatorname{Tr}P+\operatorname{Tr}Q\leq1\) provides a convenient convex normalization.
The inequality, rather than an equality, also keeps the trivial choice
\begin{equation}
    P=Q=0,\qquad c_{\mathbf x\mathbf y}=0
\end{equation}
feasible, giving objective value zero.

This makes the SDP output particularly transparent.
If \(\UpsilonPT_{HF}\) is separable across the chosen cut, every feasible witness has a nonnegative objective, where since the zero witness is feasible, the minimum must be exactly zero.
Conversely, a certified optimum
\begin{equation}
    \sum_{\mathbf x,\mathbf y}
    c_{\mathbf x\mathbf y}^{\star}
    (V_c)_{\mathbf y\mathbf x}<0
\end{equation}
provides the accessible witness \(W_C^\star=\sum_{\mathbf x,\mathbf y} c_{\mathbf x\mathbf y}^{\star}B_{\mathbf x\mathbf y}\) and rules out every classical decomposition in \cref{eq:classical-memory-separable}.
An optimum equal to zero only means that no certificate was found for the chosen probes and positive map, such that it does not establish that the underlying process memory is classical.
The partial-transpose specialization and further practical qualifications are given in Appendix~\ref{app:quantum-memory-witness}.

The response matrix therefore turns any independently valid witness in the sampled probe span into a weighted combination of measured responses.
Together, its rank, spectrum, intervention dependence, and witness value reveal whether memory remains, how it is organized, how long it persists, and whether it is incompatible with the adopted classical model.
We now test these uses in an interacting-spin process and a two-qubit memory benchmark.

There's no limitation in generalising both the response matrix and quantum memory witness to a multi-time setting to enable witnessing genuine multipartite entanglement in memory~\cite{MilzSciPost2021, White2025whatcanunitary}.

\section{Numerical results}
\label{sec:numerical-results}
We now apply the diagnostics developed above to controlled numerical processes.
First, we use an interacting-spin process to study how history dependence appears in the response spectrum, how probe choice shapes that spectrum, and how the response changes under intervening operations.
We then turn to a separate two-qubit process and test whether its memory can be certified as quantum.

For the interacting-spin process, we use an open transverse-field Ising chain of \(L\) spins, with nearest-neighbor coupling \(J\) and transverse field \(g\), governed by
\begin{equation}
    H_{\mathrm{spin}}
    =-J\sum_{r=0}^{L-2}Z_rZ_{r+1}-g\sum_{r=0}^{L-1}X_r,
    \label{eq:numerical-ising-hamiltonian}
\end{equation}
where the first spin is the accessible system, and the remaining spins form the environment.
All controls and readouts act on the first spin.
For each test here, we set \(L=6\), initialize all spins in \(\ket{0}\), fix the transverse field at \(g=1\), and use evolution intervals of \(\Delta t=0.1\).
The chain evolves under this Hamiltonian for one interval before the first intervention, between interventions, and between the last intervention and the final readout.
We simulate these dynamics using exact statevector evolution.

We construct each numerical response matrix from the normalized final state of the accessible system by the expectation values of the identity and the three Pauli operators, ordered as \(I,X,Y,Z\).
For each history--future setting pair, we weight all four values by the joint probability of the measurement such that the weighted identity value therefore records this branch probability.
Each history contributes one column, and each future setting contributes four response rows, where we evaluate these quantities directly from the simulated states, without finite-shot sampling.
The response entropy \(S_V\) and effective mode number \(R\) use normalized squared singular values and natural logarithms.
In the Ising calculations, we discard only a spectral tail carrying at most \(10^{-12}\) of the total squared weight and renormalize the retained weights.
Each study below specifies its probe ensemble and intervention sequence.
All results here were generated using MQT-YAQS~\cite{Wille2024,YAQS} and performed on an Intel i5-13600KF processor.

\begin{figure}[]
\centering
\includegraphics[width=\columnwidth]{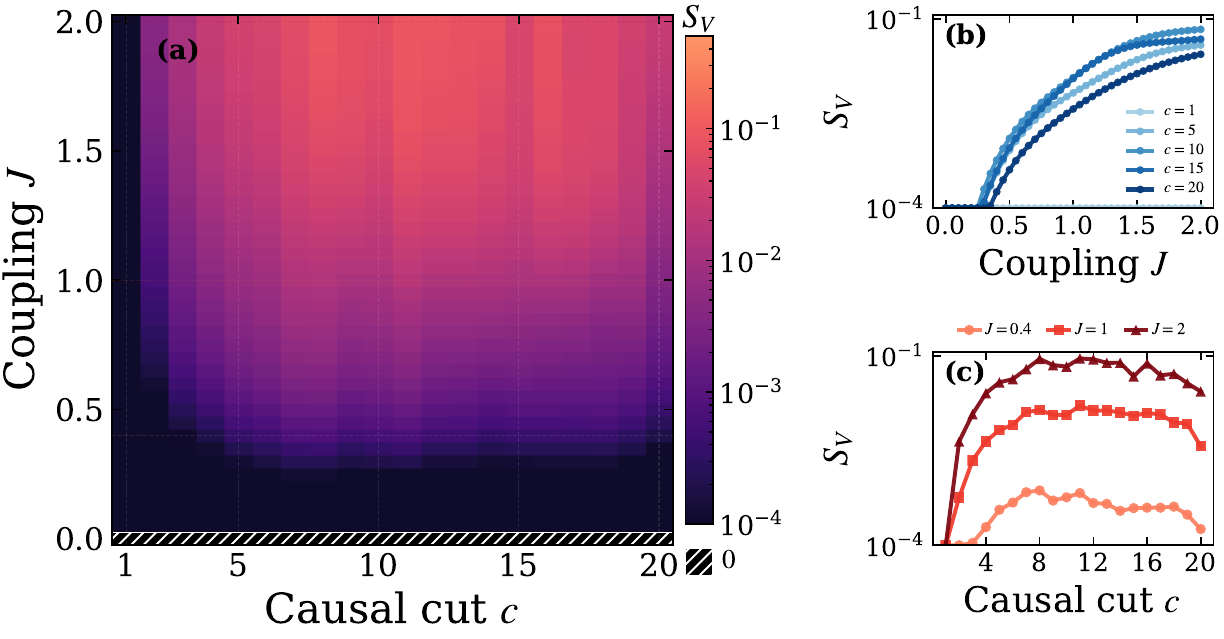}
    \caption{
        \textbf{Memory landscape.} \textbf{(a)} Response entropy \(S_V(c,J)\) for the transverse-field Ising benchmark as a function of causal cut \(c\) and interaction strength \(J\), with \(g=1\), \(L=6\), \(k=20\), \(\Delta t=0.1\), 64 conditioned history settings, and 64 future settings.
        \textbf{(b)} Slices at fixed cuts \(c=1,5,10,15,20\).
        \textbf{(c)} Slices at fixed couplings \(J=0.4,1,2\).
        For these probes, the entropy builds into an interior plateau and increases with coupling.
        Positive values below \(10^{-4}\) are clipped for visualization, and the black \(J=0\) row with white hatching denotes the exact \(S_V=0\) result.
        The landscape describes the response to the chosen history and future probes.
    }
\label{fig:landscape}
\end{figure}

\subsection{Memory landscape}
\label{sec:results_landscape}

We first examine how the response entropy varies with the location of the causal break and the system--environment coupling.
This sweep uses 20 controlled intervention slots, such that the final readout is at time 2.1.

For each cut, the sampled history contains independent Haar-distributed single-qubit unitaries at all preceding slots.
At the cut, we perform a projective measurement and keep the outcome whose projector is a Haar-distributed qubit state.
The future begins with an independently sampled Haar-distributed pure-state preparation, followed by Haar-distributed unitary controls at the remaining slots.
We use 64 conditioned history settings and 64 future settings, giving a \(256\times64\) response matrix.
Each cut uses one probe draw throughout the coupling sweep, so changes with \(J\) do not arise from a changing probe ensemble.

\Cref{fig:landscape} shows the resulting entropies across cuts and couplings, together with slices through the landscape.

At \(J=0\), panel~\textbf{(a)} recovers the decoupled control where the entropy remains at the rank-one Markovian baseline because the causal break removes history dependence.
Panel~\textbf{(b)} makes the coupling trend explicit, where beyond the earliest cut, \(S_V\) rises throughout the sampled coupling range and begins to level off at stronger coupling.
Panel~\textbf{(c)} shows the temporal trend.
For each displayed nonzero coupling, \(S_V\) rises from the initial boundary into a broad, fluctuating interior plateau before decreasing near the final boundary.
The curves remain ordered throughout, with stronger system--environment coupling producing higher response entropy.
Together, the panels show that the response entropy builds as the cut moves forward in time into the interior and, more distinctly, moves farther from the Markovian baseline as the coupling increases.

The lower values at the first and last cuts require care because moving the cut changes both the evolution time and the lengths of the sampled history and future sequences.
In addition, \(S_V\) characterizes the distribution of response weight rather than memory strength, so these lower values do not by themselves show that memory has weakened or disappeared.
The landscape shows where this weight spreads more broadly, but not whether the spread comes from several comparable modes or from weaker contributions alongside one dominant mode.

\subsection{Response modes}
\label{sec:results_modes}

To distinguish these possibilities, we resolve the individual mode weights and summarize their distribution through the effective mode number \(R\).
This is the number of equally weighted modes that would give the same entropy, rather than an integer count of nonzero singular values.
Values above one can therefore capture contributions beyond the leading mode.

We use 20 intervention slots and the same 64-by-64 Haar-probe construction as the landscape where we draw a fresh ensemble at each cut and coupling.
\Cref{fig:modes} compares \(R\) across cuts and shows selected spectra at an interior cut.
\begin{figure}[]
    \centering
    \includegraphics[width=\columnwidth]{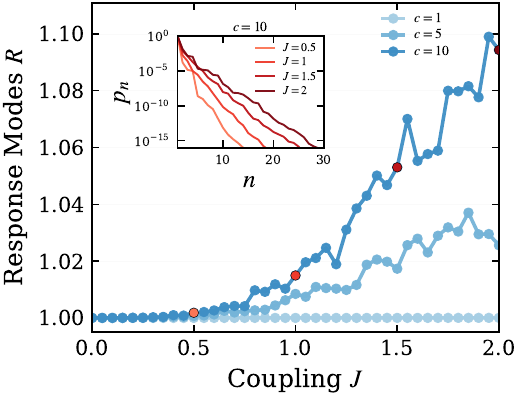}
    \caption{
        \textbf{Response modes and the singular spectrum.} The main panel shows the effective mode number \(R=\exp(S_V)\) as a function of coupling \(J\) for cuts \(c=1,5,10\), with \(L=6,k=20,g=1,\Delta t=0.1\) and \(N_{\mathrm p}=N_{\mathrm f}=64\).
        Each cut and coupling uses an independently drawn Haar-probe ensemble.
        The four highlighted points on the \(c=10\) curve correspond, by coupling and color, to the four lines in the inset at \(J=0.5,1,1.5,2\).
        Each highlighted value of \(R\) uses all retained singular values, while the matching inset line shows the first 30 normalized squared weights \(p_n\) on a logarithmic scale.
        At the interior cuts, increasing coupling raises \(R\) overall and transfers a growing fraction of the response weight into subleading modes.
    }
    \label{fig:modes}
\end{figure}

The main panel highlights the increase in response modes as the coupling between the system and environment increases.
At the earliest cut, \(R\) changes little with coupling, equivalent to what was seen in the memory landscape.
However, at the interior cuts, it instead shows a clear overall increase with coupling, with the larger change occurring at \(c=10\).
Each point uses an independently drawn probe ensemble, so the small point-to-point fluctuations should not be interpreted as a strictly monotonic fixed-probe trend.

Although the plotted values remain relatively close to one, their increase with coupling is clearly resolved.
A value near one means that the leading mode carries most of the squared response weight, but it does not mean that the remaining modes are absent.
This modest absolute scale likely reflects the limited multimode response generated by this finite-time, locally interacting Ising evolution for the chosen probes.
Even in this concentrated regime, the response matrix clearly distinguishes the growing contribution of additional modes.

At the four highlighted points, we plot the response spectrum to show how this increase arises.
At weak coupling, the corresponding inset spectrum falls rapidly beyond its leading weight.
As the coupling increases, the leading mode loses a small fraction of its weight while the subleading tail moves upward and remains visible across more mode indices.
These individually weak contributions thus accumulate in the entropy and produce the increase in \(R\).

Together, the main panel and inset show that stronger coupling redistributes an increasingly visible fraction of the measured response into subleading modes.
One mode still carries most of the weight, so a leading-mode approximation would retain most of the response, but the growing tail resolves richer history-dependent variation.
To determine whether this redistribution reflects the structure of the underlying process, we next compare the response spectrum with the PT-MPO bond spectrum across the same temporal cut.
\begin{figure}[]
    \centering
    \includegraphics[width=\columnwidth]{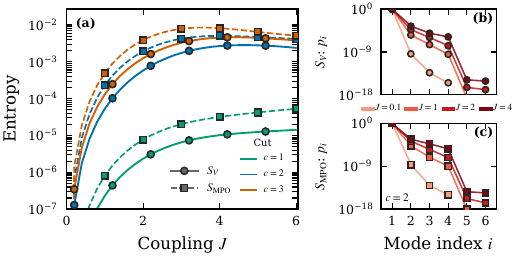}
    \caption{
        \textbf{Response spectrum compared with known PT-MPO structure.} \textbf{(a)} Response entropy \(S_V\) (solid lines, circles) and PT-MPO bond entropy \(S_{\mathrm{MPO}}\) (dashed lines, squares) at cuts \(c=1,2,3\), for \(L=6,k=3,g=1,\Delta t=0.1\).
        Both entropies depart from the rank-one baseline with increasing coupling and show the same cut-dependent growth, including a turnover at the later cuts.
        \textbf{(b,c)} Normalized squared weights obtained from the response singular values and PT-MPO bond values at \(c=2\), for \(J=0.1,1,2,4\), where six modes are shown.
        Both spectra remain dominated by one mode while stronger coupling moves weight into subleading modes.
        Their agreement is structural rather than pointwise, and the response entropy remains lower in this construction.
    }
    \label{fig:mpo_comparison}
\end{figure}

\subsection{Comparison with known process structure}
\label{sec:results_process_spectrum}

The response matrix and a PT-MPO describe the same history--future structure in different representations.
Under the recovery conditions established in \cref{sec:pt-mpo-connection}, their singular values differ only by a common scale, so their normalized spectra and spectral entropies coincide.
We examine this connection using a short Ising evolution whose dense process tensor can be computed directly.
At each temporal cut, we compare the response singular values with the PT-MPO bond values for the same process and history--future division.
Because the exact recovery conditions are not imposed in this calculation, the relevant question is whether the response spectrum follows the PT-MPO structure rather than whether their values agree pointwise.

\Cref{fig:mpo_comparison}(a) shows the same qualitative dependence on coupling and cut in both entropies.
Without coupling, both reduce to the rank-one baseline.
As the coupling increases, both initially rise at every cut.
At the later cuts, both pass through a maximum and then decrease again at stronger coupling, although the PT-MPO entropy reaches its maximum earlier.
Away from the baseline, the later cuts also produce the larger entropies.
The response entropy therefore follows the same coupling- and cut-dependent changes in temporal structure as the PT-MPO bond entropy.

The corresponding spectra in \cref{fig:mpo_comparison}(b,c) show the origin of this agreement.
Both are dominated by one leading mode, while increasing the coupling raises the first few subleading weights above a much weaker tail.
The shared entropy trend therefore reflects the same redistribution of spectral weight rather than an accidental agreement between two scalar summaries.

The agreement is structural rather than pointwise.
The PT-MPO assigns more normalized weight to the subleading modes, so its entropy is larger and reaches its maximum earlier at the interior cuts.
This offset is consistent with the theoretical result because exact equality follows only under its stated recovery conditions.
Together, the entropy and singular-value comparisons show that the response spectrum tracks changes in the known process structure across coupling and temporal cut.
This remains a characterization at a chosen cut and does not determine how long the detected history dependence remains visible once further interventions occur.
The next question is therefore one of persistence of the memory when it is repeatedly reset.
\begin{figure}[tb]
\centering
    \includegraphics[width=\columnwidth]{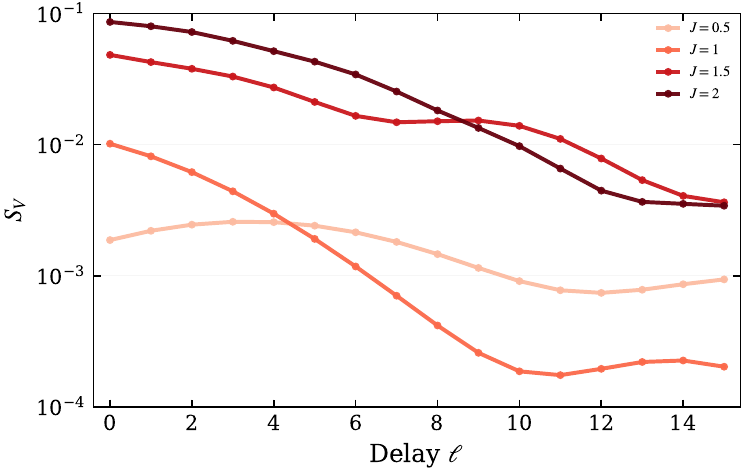}
    \caption{
        \textbf{Response persistence across an outcome-conditioned reset bridge.} Response entropy \(S_V(\ell)\) for \(J=0.5,1,1.5,2\) and bridge lengths \(\ell=0,\ldots,15\).
        A 15-control history ends with an outcome-resolved causal break and preparation of \(\ket{0}\).
        At each of the following \(\ell\) bridge steps, the protocol conditions on the effect \(\ket{0}\bra{0}\) and reprepares \(\ket{0}\).
        A further zero-outcome measurement followed by the sampled preparation begins the five-control future window.
        We use \(L=6,g=1,\Delta t=0.1\) and the same 64 history and 64 future settings throughout.
        Each response is weighted by the probability of the full retained measurement record.
        The bridge generally concentrates the response spectrum, but crossings and short nonmonotonic features show that there is no fixed decay ordering with coupling.
        All curves remain above the rank-one baseline at the reported numerical resolution.
        They describe persistence along the selected measurement record for these fixed probe windows.
    }
    \label{fig:memory_length}
\end{figure}

\subsection{Memory persistence under conditioned resets}
\label{sec:results_memory_length}

To answer this question, we keep the history and future windows fixed and lengthen only the outcome-conditioned reset bridge between them.
The same sampled settings are reused throughout, so the sweep isolates how extending the intervening sequence reshapes a fixed response experiment.

\Cref{fig:memory_length} shows that the response spectrum generally becomes more concentrated as the bridge grows.
At the shortest bridge length, the entropy is ordered by coupling, with stronger system--environment interactions producing the broader response.
For the three larger couplings, the entropy then falls substantially across the sampled window.
The weakest-coupling curve instead broadens over the first few steps before joining the overall decrease.
Repeated resets therefore generally suppress the initial spectral spread across the full bridge, but they do not impose a common decay profile.

The coupling ordering also changes with bridge length.
The \(J=1\) curve falls below the weakly coupled \(J=0.5\) curve, while the initially largest \(J=2\) response eventually falls below the \(J=1.5\) response.
The stronger-coupling curves approach one another at the longest delays, and the \(J=2\) curve is the only one that decreases monotonically across the full window.
The remaining curves contain a short initial rise, a shoulder, or a late upturn.
A broader response at the start of the bridge therefore does not imply that it remains broader for longer.
Instead, the coupling controls both the response structure entering the bridge and how the conditioned intervention sequence reorganizes it.

Because \(S_V\) measures the normalized distribution of response weight rather than its absolute magnitude, its overall decrease means that the leading mode acquires a larger share of the response while subleading modes become relatively less prominent.
Additional reset steps can change different response directions by different amounts, so the nonmonotonic features indicate changing modal visibility rather than the regeneration of memory.
None of the curves reaches the rank-one baseline at the reported numerical resolution, and memory therefore remains visible through every bridge length considered here.

This is a selected-record persistence profile for fixed finite history and future windows.
It does not define an unconditional reset lifetime or establish process-wide Markov order.
Nor does a surviving multimode response identify the carrier as quantum, because a classical feed-forward memory can produce the same qualitative signature.
We therefore close the numerical results with a separate benchmark in which a linear response witness can exclude the stated classical model.

\subsection{Certifying quantum memory from response data}
\label{sec:results_quantum_memory}

The response spectrum establishes whether the tested future remains dependent on the history, but it does not determine whether the carrier of this memory is classical or quantum.
We demonstrate this distinction using the SWAP--reset--SWAP protocol in \cref{fig:swap-memory-schematic}, for which an accessible quantum-memory witness can be obtained analytically and evaluated directly from the response matrix.

\begin{figure}[]
    \centering
    \includegraphics[width=\columnwidth]{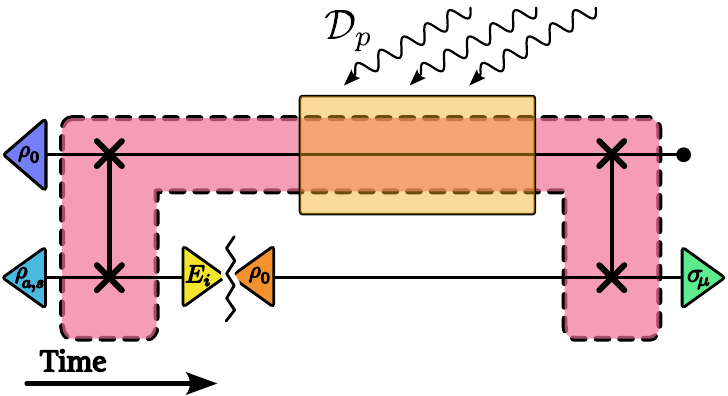}
    \caption{
        \textbf{SWAP--reset--SWAP quantum-memory test.}
        The dashed region is the process tensor.
        The environment is initialized in $\rho_0$ (top wire) and the system is prepared in one of the six Pauli eigenstates $\rho_{a,s}$ (bottom wire).
        In this benchmark $\rho_0=\ket{0}\!\bra{0}$.
        The first SWAP stores $\rho_{a,s}$ in the environment and leaves the system in $\rho_0$.
        At the causal break the system is measured with effect $E_i$ and independently re-prepared in $\rho_0$, where the outcome $i=0$ is deterministic.
        The beige block is the environmental depolarization $\mathcal D_p$, indicated by the dissipative channels.
        The second SWAP returns the surviving state to the system, after which the future probe measures a Pauli observable $\sigma_\mu$ with $\mu\in\{I,X,Y,Z\}$.
        Any dependence of $\sigma_\mu$ on $(a,s)$ must therefore have crossed the break through the environment.
    }
    \label{fig:swap-memory-schematic}
\end{figure}

The accessible system $S$ and environment $E$ are qubits, with the environment initially prepared in $\ket{0}\!\bra{0}_E$.
The system is prepared uniformly in one of the six Pauli eigenstates
\begin{equation}
    \rho_{a,s}=\frac{1}{2}\bigl(I+s\sigma_a\bigr),
    \qquad a\in\{x,y,z\},\qquad s\in\{+1,-1\}.
    \label{eq:swap-input-states}
\end{equation}
The first SWAP stores this state in the environment,
\begin{equation}
    \rho_{a,s}^{S}\otimes\ket{0}\!\bra{0}_E
    \xmapsto{\mathrm{SWAP}}
    \ket{0}\!\bra{0}_S\otimes\rho_{a,s}^{E}.
    \label{eq:first-swap-storage}
\end{equation}
At the causal break, the system is measured in the computational basis, and a fresh system is prepared in $\ket{0}\!\bra{0}$, independently of the outcome.
The corresponding measure--prepare instrument is
\begin{equation}
    \mathcal B_i(\rho)
    =\Tr(E_i\rho)\,\ket{0}\!\bra{0},
    \qquad E_i=\ket{i}\!\bra{i},
    \qquad i\in\{0,1\}.
    \label{eq:swap-causal-break}
\end{equation}
Because the first SWAP leaves $S$ in $\ket{0}$, the outcome $i=0$ occurs with unit probability and $i=1$ has zero probability.
The retained branch is therefore deterministic and requires no postselection.
Meanwhile, the stored state in $E$ crosses the causal cut and undergoes depolarization,
\begin{equation}
    \mathcal D_p(\rho)=(1-p)\rho+p\frac{I}{2},
    \qquad 0\leq p\leq1.
    \label{eq:environment-depolarization}
\end{equation}
The second SWAP returns the surviving state to the system, giving the effective input--output channel
\begin{equation}
    \mathcal E_p=\mathcal D_p,
    \qquad
    \rho^{\mathrm{out}}_{a,s}
    =(1-p)\rho_{a,s}+p\frac{I}{2}.
    \label{eq:swap-effective-channel}
\end{equation}
Since the accessible system was measured, discarded, and re-prepared at the cut, any recovered dependence on $(a,s)$ must have crossed the cut through the environmental process.

For each input $(a,s)$, we measure the identity and three Pauli observables at the output.
The resulting $4\times6$ response matrix has entries
\begin{equation}
    V_{\mu,(a,s)}
    :=\Tr\!\left(
        \sigma_\mu\rho^{\mathrm{out}}_{a,s}
    \right),
    \qquad \mu\in\{I,x,y,z\},
    \label{eq:swap-response-entries}
\end{equation}
where $\sigma_I=I$.
Its six columns are the selected histories, and its four rows are the future observables.
In the notation of \cref{eq:response-probe-operators}, these entries are precisely the accessible contractions $\Tr(B_{xy}\Upsilon_{HF})$, such that no process tensor reconstruction is performed.

The fidelity between an input state and its retrieved output follows directly from the corresponding response column:
\begin{align}
    F_{a,s}
    &:=\Tr\!\left(
        \rho_{a,s}\rho^{\mathrm{out}}_{a,s}
    \right)
      =\frac{1}{2}\left[V_{I,(a,s)}+sV_{a,(a,s)}\right],
      \label{eq:swap-single-fidelity}\\
    F_{\mathrm{av}}
    &:=\frac{1}{6}\sum_{a,s}F_{a,s}
      =\frac{1}{12}\sum_{a,s}
        \left[V_{I,(a,s)}+sV_{a,(a,s)}\right].
      \label{eq:swap-average-fidelity}
\end{align}
We assume trusted preparations and that the classical labels $(a,s)$ are not communicated to the process through an additional side channel \cite{RossetPRX2018}.
Without this assumption, a classical device could store the label and reproduce the requested state perfectly.

\begin{figure}[]
    \centering
    \includegraphics[width=\columnwidth]{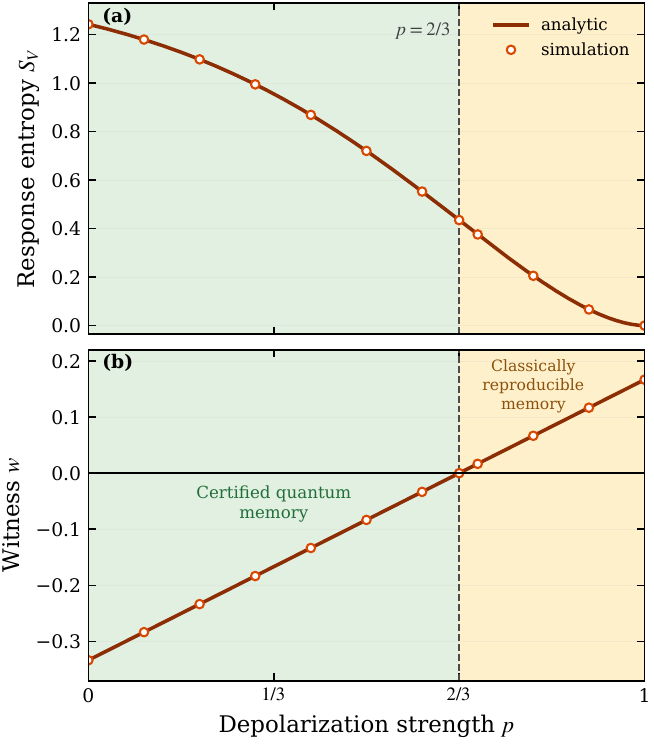}
    \caption{
        \textbf{Certifying quantum memory from response data.}
        In the noisy SWAP--reset--SWAP benchmark, an environmental qubit stores one of six Pauli eigenstate inputs across a system reset and undergoes depolarization of strength \(p\) before retrieval.
        At each \(p\), both panels use the same \(4\times6\) response matrix.
        \textbf{(a)} Response entropy \(S_V\).
        \textbf{(b)} Fixed witness \(w=2/3-F_{\mathrm{av}}\), with mean input--output fidelity \(F_{\mathrm{av}}\).
        Solid lines are analytic predictions, and circles show a subset of the 101 exact density-matrix simulations.
        At \(p=2/3\) (dashed line), the entropy remains positive while the witness crosses zero.
        Negative \(w\) certifies quantum memory under the stated classical feed-forward model, while at the memoryless endpoint \(p=1\), \(S_V=0\).
        The classical classification shown in the panel is specific to this benchmark since a nonnegative witness is generally inconclusive.
    }
    \label{fig:quantum_memory_witness}
\end{figure}

We now establish the classical bound used by the witness.
If only a classical record crosses the causal break, the effective channel has measure--prepare form \cite{Ruskai2003},
\begin{equation}
    \mathcal E_{\mathrm{cl}}(\rho)
    =\sum_\lambda\Tr(M_\lambda\rho)\,\tau_\lambda,
    \qquad M_\lambda\succeq0,
    \qquad \sum_\lambda M_\lambda=I.
    \label{eq:classical-effective-channel}
\end{equation}
Its normalized Choi operator $\omega_{\mathcal E}:=(\mathds{1}\otimes\mathcal E)(\ket{\Phi^+}\!\bra{\Phi^+})$ is separable between the logical input and output.
The six Pauli eigenstates form a qubit state 2-design and satisfy
\begin{equation}
    \frac{1}{6}\sum_{a,s}\rho_{a,s}^{\mathsf T}\otimes\rho_{a,s}
    =\frac{I_4}{6}
     +\frac{1}{3}\ket{\Phi^+}\!\bra{\Phi^+}.
    \label{eq:pauli-six-design-identity}
\end{equation}
Using the normalized-Choi Born rule in \cref{eq:swap-average-fidelity} therefore gives
\begin{equation}
    F_{\mathrm{av}}
    =\frac{1}{3}
     +\frac{2}{3}
      \bra{\Phi^+}\omega_{\mathcal E}\ket{\Phi^+}.
    \label{eq:fidelity-choi-overlap}
\end{equation}
For every normalized separable two-qubit operator, $\bra{\Phi^+}\omega_{\mathrm{sep}}\ket{\Phi^+}\leq1/2$.
The inequality holds for product pure states by the Cauchy--Schwarz inequality and extends to separable mixtures by convexity.
Hence, every classical measure--prepare memory obeys
\begin{equation}
    F_{\mathrm{av}}\leq\frac{2}{3}.
    \label{eq:classical-fidelity-bound}
\end{equation}
This is the single-copy state-estimation bound for the six-state ensemble \cite{MassarPRL1995}.

We consequently define the witness value
\begin{equation}
    w:=\frac{2}{3}-F_{\mathrm{av}}
      =\Tr(W_{\mathrm{fid}}\omega_{\mathcal E}),
    \,
    W_{\mathrm{fid}}
    :=\frac{1}{3}\left(
        I_4-2\ket{\Phi^+}\!\bra{\Phi^+}
    \right).
    \label{eq:swap-fidelity-witness}
\end{equation}
The operator $W_{\mathrm{fid}}$ is a partial-transpose witness of the form introduced in \cref{sec:quantum-memory-certification}:
\begin{equation}
    W_{\mathrm{fid}}
    =\frac{2}{3}
      \left(\ket{\Psi^-}\!\bra{\Psi^-}\right)^{\mathsf T_F},
    \qquad
    \ket{\Psi^-}=\frac{\ket{01}-\ket{10}}{\sqrt{2}}.
    \label{eq:swap-ppt-witness}
\end{equation}
It corresponds to the $P=0$ specialization of \cref{eq:response-witness-sdp}, with $Q=(2/3)\ket{\Psi^-}\!\bra{\Psi^-}$.
The symmetry of this benchmark gives $Q$ analytically, so no SDP search is required.

The same witness is evaluated entirely within the measured response space.
Since $V_{I,(a,s)}=1$ for every deterministic branch, \cref{eq:swap-average-fidelity,eq:swap-fidelity-witness} can be written as the linear response functional
\begin{equation}
    w
    =\frac{1}{36}\sum_{a,s}V_{I,(a,s)}
     -\frac{1}{12}\sum_{a,s}sV_{a,(a,s)}.
    \label{eq:swap-linear-response-witness}
\end{equation}
Thus $W_{\mathrm{fid}}$ lies in the operator span of the selected six history preparations and four future observables.
\Cref{eq:swap-linear-response-witness} is an explicit instance of the response-matrix contraction in \cref{eq:response-witness-contraction}.

Finally, contracting a process separable across the selected $H:F$ cut with the fixed local operations of this protocol can only produce a separable effective Choi operator.
Therefore,
\begin{equation*}
    \Upsilon_{HF}\ \text{separable across }H:F
    \,\Longrightarrow\,
    \omega_{\mathcal E}\ \text{separable}
    \,\Longrightarrow\,
    w\geq0.
\end{equation*}
Its contrapositive proves that $w<0$ certifies entanglement of the full process across the causal cut.
Because the accessible system is discarded and reset at that cut, this entanglement certifies quantum memory carried through the unresolved process rather than direct transmission by the system.

For the depolarizing family, the normalized effective Choi operator is
\begin{equation}
    \omega_p
    =(1-p)\ket{\Phi^+}\!\bra{\Phi^+}+p\frac{I_4}{4}.
    \label{eq:depolarizing-choi-state}
\end{equation}
Its partially transposed minimum eigenvalue and the response witness are
\begin{equation}
    \lambda_{\min}(\omega_p^{\mathsf T_F})=\frac{3p-2}{4},
    \qquad
    F_{\mathrm{av}}=1-\frac{p}{2},
    \qquad
    w=\frac{p}{2}-\frac{1}{3}.
    \label{eq:swap-analytic-results}
\end{equation}
Consequently, within this family,
\begin{equation}
    w<0
    \quad\Longleftrightarrow\quad
    p<\frac{2}{3}
    \quad\Longleftrightarrow\quad
    \omega_p^{\mathsf T_F}\not\succeq0.
    \label{eq:swap-exact-witness-threshold}
\end{equation}
The witness therefore detects the entire entangled regime of this benchmark.

The exact simulations agree with the analytic output states, response spectra, and witness values to numerical precision.
As shown in \cref{fig:quantum_memory_witness}\textbf{(a)}, depolarization progressively concentrates the response spectrum: its three subleading modes lose weight, and $S_V$ decreases continuously.
Nevertheless, $S_V$ remains nonzero at $p=2/3$ and throughout $2/3\leq p<1$, showing that the selected future still depends on the history.

The witness in \cref{fig:quantum_memory_witness}\textbf{(b)} supplies the distinction that the spectrum alone cannot.
For $p<2/3$, its negative value certifies quantum memory.
At $p=2/3$, a measure--prepare strategy reproduces the channel by choosing a Pauli axis uniformly, measuring the input along that axis, storing the axis and outcome as a classical record, and preparing the recorded eigenstate after the break.
This strategy contracts every Bloch vector by $1/3$ and hence implements $\mathcal D_{2/3}$.
Mixing it with complete depolarization realizes $\mathcal D_p$ for $2/3<p<1$, establishing that the remaining memory in this interval is classically reproducible.
At $p=1$, the output is independent of the input, the response becomes rank one, and the process is memoryless.

The classification above the threshold follows from this explicit classical realization, not from the nonnegative witness value: for an arbitrary process, $w\geq0$ remains inconclusive.
The benchmark therefore separates the two roles of the same response data.
Its spectrum characterizes the memory visible to the selected probes, whereas the accessible witness certifies when that memory cannot be carried by the adopted classical feed-forward model.
A complementary approach based on finite-window recovery is developed in Ref.~\cite{An2026}.

\section{Discussion}
\label{sec:discussion}
Here, we introduced a response matrix that organizes how future observations change across controlled histories.
Its spectrum gives a compact description of the structure and persistence of environmental memory, while independently valid linear witnesses can test whether specified classical models of this environment are sufficient.
Under the conditions derived here, the response rank and spectrum match those of the underlying process.
Outside those conditions, they remain an operational summary of the responses that were measured.
The method therefore provides a middle ground between detecting memory and reconstructing every possible multi-time response.

This middle ground changes what can be asked in practice.
Instead of first reconstructing a complete process, one can choose histories and future observations that address a particular model or control question.
Each matrix entry is an independent query, so numerical evaluations can be distributed and extended where more detail is needed.
Applying the same additional queries to a physical process and a learned model would expose temporal discrepancies that one-time observables miss. Such tests could complement scalable learning of Lindblad noise generators from local observable time series \cite{RamosRamos_NoiseLearning_2026}, providing additional checks of whether the learned model captures the observed temporal dependencies.
Agreement would validate the model on those tests, rather than predict unmeasured responses or establish equivalence of the complete processes.

For quantum processors, this creates a direct way to test whether one job has left an environmental trace that changes the next, even after the qubits have been reset.
Identifying which operations leave persistent traces could inform both reset protocols and job scheduling, for example, by separating workloads that would otherwise interfere or by inserting targeted recovery.
Memory characterization can thereby become an input to model validation and control design \cite{Fux2021,Ortega2024,Butler2024} rather than only a diagnosis after failure.

Moving from exact simulations to hardware will require finite-shot uncertainty estimates for spectra and witnesses.
The SDP search remains a theoretical capability, where we leave its numerical implementation and assessment of computational cost and robustness to future work.
Unconditional statements about memory persistence will also require all relevant intervention outcomes rather than the single record studied here.
Developing sampling strategies that recover the required spectral features without measuring every matrix entry remains open, and complete process tensors remain essential when arbitrary multi-time responses are required.
For targeted questions, however, response matrices turn system-only interventions into a scalable probe of hidden dynamics.
They offer a route from detecting environmental memory to learning how to model and control it.

\section*{Acknowledgments}
S. G. thanks Krishna Priya K. M. for the informative discussions on semidefinite programming.
The project leading to this application/publications has received funding from the European Research Council (ERC) under the European Union’s Horizon 2020 research and innovation program (grant agreement No. 101001318).
KM is supported by the Singapore Ministry of Education grant MOE Kick-Starter grant SKI\_2021\_07\_02, and from the Centre for Quantum Technologies grant CQT\_SUTD\_2025\_01.

\section*{Data and code availability}
The data and code used to generate and analyze the numerical results is available from the authors upon reasonable request.

\section*{AI Usage Statement}
During the preparation of this manuscript, the authors used LLMs to assist with language editing and improve the clarity and organization of the text. The authors reviewed and revised the AI-assisted content and take full responsibility for the scientific content and conclusions.

\bibliography{bib.bib}

\clearpage
\onecolumngrid

\appendix
\section{Response-spectrum entropy for a single classical memory bit}
\label{app:classical_memory_entropy}

This example shows how the response spectrum can resolve a classical record
retained by the bath across a causal break. With probes that distinguish its
two values, the response matrix has one mode for each populated branch,
and the response entropy measures the balance of their squared probability weights.
Thus, a nonzero response entropy need not imply quantum memory.

\paragraph{Process and classical record.}
Let $S$ and $E$ be system and bath qubits, initially in
\begin{equation}
    \rho_{SE}(0)=|0\rangle\langle0|_S\otimes
    \sum_{\lambda=0}^{1}p_\lambda|\lambda\rangle\langle\lambda|_E,
    \qquad p_0=\epsilon,\quad p_1=1-\epsilon.
    \label{eq:app_classical_initial_state}
\end{equation}
Here $0\leq\epsilon\leq1$. Two successive interaction intervals are
separated by the causal break. In each interval, the bath applies a CNOT
operation on the system,
\begin{equation}
    U_{SE}=I_S\otimes|0\rangle\langle0|_E
           +X_S\otimes|1\rangle\langle1|_E.
    \label{eq:app_bath_controlled_cnot}
\end{equation}
The bath label $\lambda$ is unchanged, while the conditional system channel
is $\mathcal X_\lambda(A)=X^\lambda A X^\lambda$.
After the first interval,
\begin{equation}
    \rho_{SE}(1)=\sum_{\lambda=0}^{1}p_\lambda
    |\lambda\rangle\langle\lambda|_S\otimes
    |\lambda\rangle\langle\lambda|_E.
    \label{eq:app_state_at_cut}
\end{equation}
This is a classical mixture of product states. The same bath record selects
the system dynamics before and after the break; no coherence between its
two values is required.

\paragraph{Temporal cut and process operator.}
Absorb the fixed initial preparation into the process. Denote the system
output before the break by $H$, the freshly prepared input to the second
interval by $F_{\mathrm{in}}$, and its output by $F_{\mathrm{out}}$.
The future space is $F=F_{\mathrm{in}}\otimes F_{\mathrm{out}}$.
Using input-first Choi ordering, define
\begin{equation}
    J_\lambda=(I\otimes X^\lambda)|\Omega\rangle\langle\Omega|
                        (I\otimes X^\lambda),
    \qquad |\Omega\rangle=|00\rangle+|11\rangle.
\end{equation}
The resulting process Choi operator is
\begin{equation}
    \Upsilon_{HF}=\sum_{\lambda=0}^{1}p_\lambda
    |\lambda\rangle\langle\lambda|_H\otimes J_\lambda^F.
    \label{eq:app_classical_process_choi}
\end{equation}
Each $J_\lambda$ is positive and satisfies
$\operatorname{Tr}_{F_{\mathrm{out}}}J_\lambda=I_{F_{\mathrm{in}}}$.
Equation~\eqref{eq:app_classical_process_choi} therefore gives an explicit
classical feed-forward realization, separable across $H:F$.
Although $J_\lambda$ itself is entangled between the input and output of
the future interval, both legs lie in $F$: that channel entanglement does
not represent quantum memory across the selected cut.

\paragraph{History and future probes.}
At the break, measure $M_i=|i\rangle\langle i|$ and re-prepare
$\sigma=|0\rangle\langle0|$, independently of $i$.
The corresponding operation is
$\mathcal A_i(A)=\operatorname{Tr}(M_iA)\sigma$, with product Choi
operator $M_i^T\otimes\sigma$. Finally, measure $M_j$ after the second
interval. In the response convention of
\cref{eq:response-probe-operators}, the effective probes are
\begin{equation}
    h^{(i)}=M_i,\qquad f^{(j)}=\sigma\otimes M_j,
    \qquad B_{ij}=[h^{(i)}]^T\otimes[f^{(j)}]^T.
\end{equation}
All these matrices are real and diagonal. Their accessible operator space is
\begin{equation}
    \mathcal S_{\mathrm{acc}}
    =\operatorname{span}_{\mathbb R}\{B_{ij}:i,j\in\{0,1\}\}.
\end{equation}
These four probe operators do not span the full process operator space,
but suffice to distinguish the two classical branches.
Contracting them with the process gives
\begin{align}
    (V_c)_{ji}
    &=\operatorname{Tr}(B_{ij}\Upsilon_{HF})\nonumber\\
    &=\sum_{\lambda=0}^{1}p_\lambda
      \operatorname{Tr}(M_iM_\lambda)
      \operatorname{Tr}\!\left[M_jX^\lambda\sigma X^\lambda\right]
      =p_i\delta_{ij}.
    \label{eq:app_classical_joint_probability}
\end{align}
Thus the history outcome identifies the bath bit, and the future outcome
reads the same bit after the system has been reset. These are joint
probabilities, including the probability $p_i$ of the history outcome;
normalizing each history branch separately would remove that information.

\paragraph{Response spectrum and mode weights.}
The response matrix and its singular values are
\begin{equation}
    V_c=\begin{pmatrix}\epsilon&0\\0&1-\epsilon\end{pmatrix},
    \qquad \{\sigma_\ell^{(c)}\}=\{\epsilon,1-\epsilon\}.
    \label{eq:app_classical_response_matrix}
\end{equation}

Consequently, the response entropy of \cref{eq:SV_formal} is
\begin{equation}
    S_V(c) = -u\ln u - (1-u)\ln(1-u),
    \qquad
    u := \frac{\epsilon^2}
              {\epsilon^2 + (1-\epsilon)^2}.
    \label{eq:app_classical_shannon_relation}
\end{equation}
This is the entropy of the normalized squared response weights.
For $0<\epsilon<1$, the response has rank two;
at either deterministic endpoint it has rank one and zero entropy (taking $0\ln0=0$).
The maximum at $\epsilon=1/2$ is $\ln2$, or one bit when logarithms are base two.
Since $\Pr(j\mid i)=\delta_{ij}$ for every outcome of nonzero probability,
$H(j\mid i)=0$ and $I(i:j)=H(i)=-\sum_i p_i\ln p_i$. This mutual information generally differs from the response
entropy $S_V(c)$.
Retrieval is perfect for every $\epsilon$, while $S_V(c)$ describes how evenly
the response weight is distributed between the classical branches.

The resolved modes depend on the chosen probes.
For example, ignoring the history outcome replaces $V_c$ with the
single column $(p_0,p_1)^T$, whose only normalized singular value is one
and whose response entropy is zero, although the bath process is unchanged.
The response spectrum therefore characterizes memory accessible to the
specified probes. In this example, that memory is entirely classical, as
established independently by \cref{eq:app_classical_process_choi}; its
nonzero entropy is not a quantum-memory certificate.

\section{Partial-transpose witnesses and the choice of probes}
\label{app:quantum-memory-witness}

The transpose provides a concrete positive map for the witness search
in \cref{eq:response-witness-sdp}. Choosing
\(\Lambda_F=T_F\), which is self-adjoint with respect to the
Hilbert--Schmidt inner product, gives
\begin{equation}
    W_C=P+Q^{T_F},
    \qquad P,Q\succeq0.
    \label{eq:appendix-decomposable-witness}
\end{equation}
For every separable positive operator \(\Sigma_{HF}\), its partial
transpose remains positive, so
\(\operatorname{Tr}(W_C\Sigma_{HF})\geq0\).
A negative expectation therefore certifies entanglement across the
selected temporal cut. This family cannot detect entangled operators
with positive partial transpose: for such operators, both contributions
to the witness expectation remain nonnegative.

\paragraph{Connection to the eigenvector construction.}
Setting \(P=0\) gives \(W_C=Q^{T_F}\).
Without the accessible-space restriction, the unit-trace search becomes
\begin{equation}
    \min_{Q\succeq0,\ \operatorname{Tr}Q=1}
    \operatorname{Tr}(Q\UpsilonPT_{HF}^{T_F})
    =
    \lambda_{\min}(\UpsilonPT_{HF}^{T_F}).
    \label{eq:appendix-ppt-eigenvalue}
\end{equation}
Indeed, the objective is an average of expectation values of
\(\UpsilonPT_{HF}^{T_F}\), and its minimum is attained by placing
all weight on a lowest-eigenvalue eigenvector.
If this eigenvector is \(\lvert\nu\rangle\), then
\(Q=\lvert\nu\rangle\langle\nu\rvert\) yields the witness
\(W_C=(\lvert\nu\rangle\langle\nu\rvert)^{T_F}\).
This recovers the partial-transpose SDP used by
Giarmatzi and Costa~\cite{Giarmatzi2021}.
In our setting, one must additionally require
\(Q^{T_F}\in\mathcal S_{\mathrm{acc}}\), since only witnesses in this
space can be evaluated from the response matrix.

Although \(P\succeq0\) contributes a nonnegative expectation on every
physical process, retaining it can nevertheless improve an accessible
witness search. The accessibility constraint applies to the sum
\(P+Q^{T_F}\), not to either term separately. Their inaccessible
components can cancel, producing a measurable witness that would be
excluded by the restriction \(P=0\). Thus, without probe restrictions,
\(P=0\) already detects every negative partial transpose; with restricted
probes, allowing \(P\) can be essential.

\paragraph{Choice of normalization.}
The constraint
\(\operatorname{Tr}P+\operatorname{Tr}Q\leq1\) both bounds the witness
scale and includes the zero witness. Consequently, the optimum is
never positive and is exactly zero for every cut-separable process.
If a negative optimum exists, its trace constraint is saturated:
otherwise, rescaling all variables by a common positive factor would
produce a more negative feasible value.

One may instead impose
\(\operatorname{Tr}P+\operatorname{Tr}Q=1\) when a fixed normalization
is desired, for example to recover
\cref{eq:appendix-ppt-eigenvalue} or compare witnesses within the same
normalized family. Provided the normalized feasible set is nonempty,
this does not change whether a negative witness exists.
It does change the interpretation of a failed search: the optimum
may now be positive, and both zero and positive values are inconclusive.
A restricted probe span can also make this equality-constrained problem
infeasible. We therefore retain the inequality in the main text.

\paragraph{An example of probe dependence.}
Consider two qubit factors, one on each side of the temporal cut, and
the normalized operators
\begin{equation}
    \rho_{\mathrm{ent}}
    =
    \lvert\Phi^+\rangle\langle\Phi^+\rvert,
    \qquad
    \rho_{\mathrm{sep}}
    =
    \frac12
    \left(
      \lvert00\rangle\langle00\rvert+
      \lvert11\rangle\langle11\rvert
    \right).
    \label{eq:appendix-probe-example-states}
\end{equation}
These operators illustrate the entanglement test on the physical
\(H:F\) factors; they are not assumed to represent an arbitrary full
process tensor. A temporal realization is a process that emits the two
members of an initially stored pair before and after the break,
respectively, while ignoring the freshly prepared system input.
The corresponding input identity factors can then be suppressed in
this example.

Suppose first that the available local probes are the computational-basis
projectors. Their real span is
\(\operatorname{span}_{\mathbb R}\{I,Z\}\), and hence
\begin{equation}
    \mathcal S_Z
    =
    \operatorname{span}_{\mathbb R}
    \{I\otimes I,\ I\otimes Z,\ Z\otimes I,\ Z\otimes Z\}.
\end{equation}
The two operators in
\cref{eq:appendix-probe-example-states} give identical expectations
for all four basis elements: respectively \(1,0,0,1\).
Therefore, they produce identical response data for these probes.
Any accessible witness that is nonnegative on \(\rho_{\mathrm{sep}}\)
must also be nonnegative on \(\rho_{\mathrm{ent}}\).
Here the absence of a certificate follows from a precise information
limitation: the probes do not distinguish the entangled operator from
a separable one.

Now include projectors onto the \(X\) eigenstates on each side.
The accessible space expands to
\begin{equation}
    \mathcal S_{XZ}
    =
    \operatorname{span}_{\mathbb R}
    \{A\otimes B:\ A,B\in\{I,X,Z\}\}.
\end{equation}
Within this space, consider
\begin{equation}
    W_{XZ}
    =
    \frac14
    \left(I\otimes I-X\otimes X-Z\otimes Z\right).
    \label{eq:appendix-xz-witness}
\end{equation}
For a product of normalized qubit states with Bloch vectors \(a,b\),
its expectation is
\((1-a_xb_x-a_zb_z)/4\geq0\), by the
Cauchy--Schwarz inequality and the unit bound on Bloch-vector lengths.
Linearity extends this bound to every separable positive operator.
However,
\begin{equation}
    \operatorname{Tr}(W_{XZ}\rho_{\mathrm{ent}})
    =-\frac14,
\end{equation}
so the additional probes reveal entanglement without requiring full
tomography.

This same example shows why retaining \(P\) matters.
Every operator in \(\mathcal S_{XZ}\) is unchanged by partial
transposition on \(F\), since \(I,X,Z\) are real symmetric matrices.
Consequently, an accessible witness of the restricted form
\(W_C=Q^{T_F}\) must satisfy
\(Q=W_C^{T_F}=W_C\succeq0\), and cannot have a negative expectation on
any positive process operator.

Nevertheless, \(W_{XZ}\) is feasible in the larger search:
\begin{equation}
    P=Q=\frac12\lvert\Psi^-\rangle\langle\Psi^-\rvert,
    \qquad
    \lvert\Psi^-\rangle
    =
    \frac{\lvert01\rangle-\lvert10\rangle}{\sqrt2},
    \qquad
    P+Q^{T_F}=W_{XZ}.
\end{equation}
Here \(\operatorname{Tr}P+\operatorname{Tr}Q=1\).
The inaccessible \(Y\otimes Y\) contributions in the two terms cancel,
leaving the accessible witness in
\cref{eq:appendix-xz-witness}.
Thus the choice of probes controls which witnesses can be evaluated,
while retaining \(P\) can enlarge the detectable family even within
a fixed probe space.

\section{Numerical consistency checks}
\label{app:numerical-consistency}

The main benchmarks use a fixed probe budget and a finite spin environment. Here, we report the
supporting checks that determine how the response entropy changes with these numerical choices.

\begin{figure}[]
    \centering
    \includegraphics[width=0.6\textwidth]{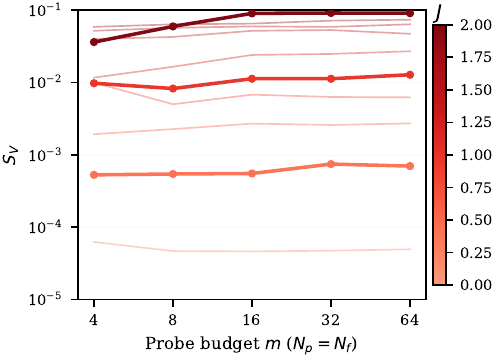}
    \caption{
    \textbf{Probe-budget dependence of the cross-cut memory entropy.}
    Cross-cut memory entropy \(S_V\) at fixed cut \(c=10\) as a function of the symmetric probe
    budget \(m\), with \(N_{\mathrm p}=N_{\mathrm f}=m\). Faint curves show the full coupling sweep,
    while highlighted curves correspond to representative couplings \(J=0.4,1,2\). The
    coupling-dependent ordering is visible at modest budgets, and the displayed changes generally
    become smaller as \(m\) increases. The main benchmarks use the largest tested value,
    \(N_{\mathrm p}=N_{\mathrm f}=64\).
    }
    \label{fig:convergence}
\end{figure}

\subsection{Probe-budget dependence}
\label{app:probe-budget-dependence}

A central practical question is how many sampled past and future probes are needed to resolve the
cross-cut memory structure. To address this, we vary the probe budget symmetrically as
\(N_{\mathrm p}=N_{\mathrm f}=m\) and monitor the resulting entropy \(S_V\) at fixed cut \(c=10\)
across the coupling sweep. \Cref{fig:convergence} shows the resulting dependence on \(m\), with faint
curves indicating the full set of sampled couplings and highlighted curves showing representative
values \(J=0.4,1,2\).

The coupling-dependent ordering is already visible at modest probe budgets. Weakly interacting cases
remain in the low-entropy regime, while stronger interactions produce larger values of \(S_V\).
Increasing \(m\) mainly refines the quantitative estimate rather than changing this ordering. The
largest changes occur at small probe budgets; beyond roughly \(m\sim 8-16\), the representative
curves vary more slowly and approach a broad plateau on the logarithmic scale shown in
\cref{fig:convergence}.

Within this fixed-cut sweep, a finite sampled ensemble resolves the dominant singular-spectrum
broadening, while larger probe budgets change the estimate less. This does not establish convergence
to an informationally complete tester or a protocol-independent value. It shows only that the
qualitative ordering is stable over the larger budgets tested here.

The default choice \(N_{\mathrm p}=N_{\mathrm f}=64\) is the largest budget in this fixed-cut test.
We therefore use it for the memory landscape and mode analysis, without claiming that the same
budget is sufficient for other probe distributions, cuts, or processes.

\begin{figure}[]
    \centering
    \includegraphics[width=\textwidth]{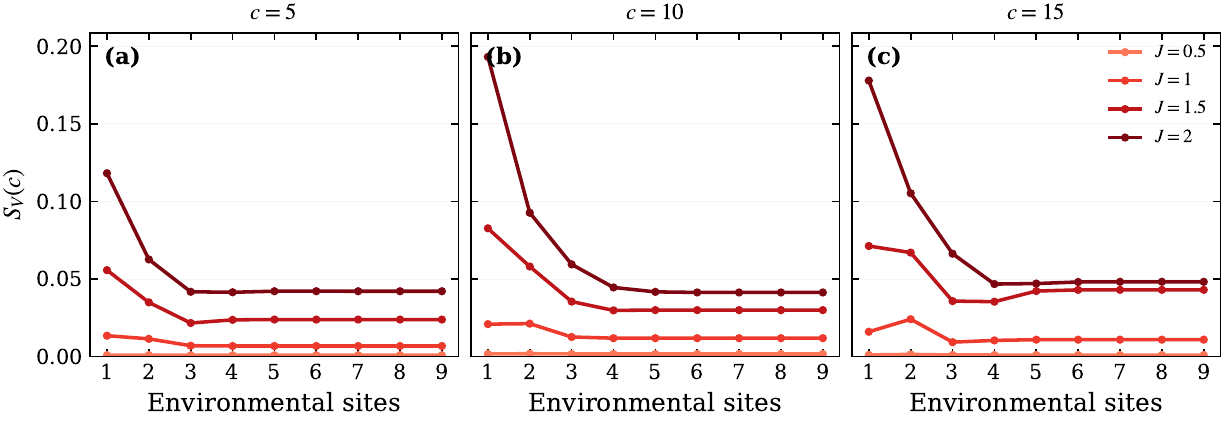}
    \caption{
    \textbf{Finite-environment consistency of fixed-cut memory.}
    Cross-cut memory \(S_V(c)\) at representative cuts as a function of the number of environmental
    sites for couplings \(J=0.5,1,1.5,2\).
    \textbf{(a)} \(c=5\).
    \textbf{(b)} \(c=10\).
    \textbf{(c)} \(c=15\).
    Weak coupling produces little accessible memory for all environment sizes studied. At stronger
    coupling, the extracted memory is largest for the smallest environments and changes less once
    several environmental sites are included. The strongest size dependence in this one-system-site
    benchmark occurs among the smallest environments sampled.
    }
    \label{fig:finite_size}
\end{figure}

\subsection{Finite-environment consistency}
\label{app:finite-environment-consistency}

We next assess how the fixed-cut entropy depends on environment size. For the one-system-site
benchmark considered here, we vary the number of environmental sites and evaluate \(S_V(c)\) at
representative cuts and couplings. \Cref{fig:finite_size} shows the resulting dependence.

The strongest size dependence occurs for very small environments. At weak coupling, the entropy
remains close to the low-memory regime for all environment sizes studied. At stronger coupling,
\(S_V(c)\) is largest for one or two environmental sites, then decreases and approaches a broad
plateau once several environmental sites are included. Thus increasing the environment does not lead
to continued growth of the extracted memory over the simulated time window. Instead, the data show a
crossover from a strongly finite-environment regime to a more stable extended-environment regime.

One possible interpretation follows from the finite observation time and the single-system-site
geometry. In very small environments, information transferred out of the system can return
efficiently, producing a comparatively large cross-cut response. With more environmental sites, it
can spread over more degrees of freedom and the displayed fixed-cut entropy changes less. The
nonzero values at the larger sampled sizes show that the detected response is not restricted to a
one-site environment within this benchmark.

This check should be interpreted within the benchmark geometry. Because the present simulations use
one system site coupled to an environment, only a modest number of environmental sites is needed
before the selected cuts become comparatively stable. For larger accessible systems, or for
geometries with more system--environment contact points, one should generally expect larger
environments to be required before analogous finite-environment plateaus are reached.

\end{document}